\documentclass[journal]{IEEEtran}
\usepackage{amsmath,amsfonts}
\usepackage{amssymb}
\usepackage{algorithmic}
\usepackage{algorithm}
\usepackage{array}
\usepackage[caption=false,font=normalsize,labelfont=sf,textfont=sf]{subfig}
\usepackage{textcomp}
\usepackage{stfloats}
\usepackage{url}
\usepackage{verbatim}
\usepackage{graphicx}
\usepackage{cite}
\usepackage{xcolor}
\usepackage{amsthm} 

\def\Pr{\mathbb{P}}
\newtheorem{lemma}{Lemma}
\newtheorem{prop}{Proposition}
\newtheorem{theorem}{Theorem}

\definecolor{red}{rgb}{0.0, 0.65, 0.58}

\newcommand{\matr}[1]{\mathbf{#1}}     
\newcommand{\vect}[1]{\mathbf{#1}}     
\newcommand{\Mod}[1]{\ (\mathrm{mod}\ #1)}

\graphicspath{ {Figures/} }

\begin{document}

\title{Zadoff-Chu Sequences for Chirp-Domain Communication: Diversity-Complexity Tradeoffs in Doubly Dispersive Channels}
\author{Rawan~Alghamdi,~\IEEEmembership{Student Member,~IEEE,}~
        Moahmed~Siala,~\IEEEmembership{Senior Member,~IEEE,}~ 
        Tareq Y.~Al-Naffouri,~\IEEEmembership{Fellow,~IEEE,}~ 
        and~Mohamed-Slim~Alouini~\IEEEmembership{Fellow,~IEEE}~ 

\thanks{R. Alghamdi, T. Y.~Al-Naffouri, and M.-S. Alouini are with the Electrical and Computer Engineering Program, Division of Computer, Electrical and Mathematical Sciences and Engineering (CEMSE), King Abdullah University of Science and Technology (KAUST), Thuwal, 23955-6900, Kingdom of Saudi Arabia
 (email: \{rawan.alghamdi,  tareq.alnaffouri, slim.alouini\}@kaust.edu.sa).  (Corresponding author: R. Alghamdi)  }
\thanks{M. Siala is with the Mediatron Laboratory, Higher School of Communication of Tunis (SUP’COM), University of Carthage, Tunis 2083, Tunisia (email: mohamed.siala@supcom.tn). } 
}
\markboth{Journal of \LaTeX\ Class Files,~Vol.~14, No.~8, August~2021}%
{Shell \MakeLowercase{\textit{et al.}}: A Sample Article Using IEEEtran.cls for IEEE Journals}


\maketitle

\begin{abstract}
Orthogonal frequency-division multiplexing (OFDM) combats multipath-induced time dispersion by dividing the channel into narrowband sub-channels. However, when the channel also exhibits frequency dispersion due to mobility (Doppler effect), these sub-channels lose orthogonality and cause inter-carrier interference that degrades the reliability performance of the communication system. 
We investigate Zadoff-Chu (ZC) sequences for chirp-domain communication to improve reliability in time-frequency dispersive channels.
We show that ZC sequences are the only constant-amplitude zero-autocorrelation (CAZAC) sequences that transform a doubly dispersive channel into a singly dispersive channel that is either pure time or frequency dispersion. This transformation is controlled by the ZC root, which provides a geometric projection from the delay-Doppler domain onto a one-dimensional chirp-domain axis with a closed-form design rule. 
Because the transformed channel is singly dispersive, the receiver equalizes a one-dimensional convolutional channel rather than a two-dimensional delay–Doppler channel and can reuse trellis-based detectors that generate the soft information that coded systems require.
Then, we present ZC-based modulations, derive the effective channel after transformation, and analyze diversity, which reveals an underlying trade-off between diversity and receiver complexity. 
When evaluated at a vehicle speed of $540$ km/h and a carrier frequency of $4$ GHz, ZC-based modulations demonstrate performance comparable to affine frequency division multiplexing (AFDM) and orthogonal time-frequency space (OTFS), with gains of about  $5$ dB over orthogonal chirp division multiplexing (OCDM) and  $10$ dB over OFDM.
\end{abstract}

\begin{IEEEkeywords}
 Chirp modulation, Zadoff-Chu sequences, linear time-varying channels, doubly dispersive channels, high mobility communication systems.
\end{IEEEkeywords}

\section{Introduction}\label{secrel}
\IEEEPARstart{M}{ulti-carrier} techniques, which superimpose several carrier-modulated waveforms, are one of the most widely adopted physical layer techniques to combat the adversity of the wireless channel \cite{2014sahin_mc}. The most well-known and understood multi-carrier waveform scheme is orthogonal frequency-division multiplexing (OFDM), which can combat the time-dispersion in linear time-invariant (LTI) systems \cite{matz_TF}. The popularity of this technique stems from the fact that OFDM systems modulate the information symbols with base functions that correspond to the eigenfunctions of the LTI channel, which are well-approximated by Fourier series. Unfortunately, this elegant solution breaks down when the channel simultaneously experiences time and frequency dispersion because of the time-variance nature of the channel. The eigenfunctions of linear time-variant (LTV) channels cannot be expressed as complex exponentials, and no analytical expression exists for the eigenfunctions of a general LTV system \cite{hlawatsch2011_book}.
The study of reliable communication in doubly selective channels becomes particularly interesting in the context of next-generation wireless systems. One reason is that high mobility induces Doppler spread, which renders the channel time and frequency dispersive. In addition, next-generation wireless communication systems are expected to operate in high-frequency bands to support large bandwidth, which may be challenging since Doppler shifts are proportional to the frequency. Since high-mobility communication is needed to support several pillars of 6G networks---such as high-speed trains, low-Earth-orbit (LEO) satellites, airplanes/drones, and mmWave high-frequency spectrum band operations---reliable communication over doubly selective channels is a necessary condition for the future of wireless communication networks.

Various solutions have aimed to tackle the simultaneous time and frequency dispersions in wireless communication channels that are often found in high-mobility systems. Recently, several multi-carrier waveforms have been proposed to tackle the challenges of time and frequency variance from a physical-layer perspective, such as frequency-modulated OFDM (FM-OFDM) \cite{2023Hernando_FMOFDM}, orthogonal time frequency space (OTFS) \cite{Hadani2017_otfsm}, orthogonal chirp division multiplexing (OCDM) \cite{Ouyang2016_OCDM}, and affine frequency division multiplexing (AFDM) \cite{2021_Benami}
modulations. FM-OFDM modulates the information onto the phase difference, which makes this technique effective in Doppler spread channels, although at the cost of reduced spectrum efficiency \cite{2023Hernando_FMOFDM}. OTFS modulation can combat the degradation of the doubly selective channel by spreading the information symbols onto the delay-Doppler orthogonal basis functions. However, the ideal pulse for an OTFS system must satisfy the bi-orthogonality condition---such a pulse does not exist due to the uncertainty principle \cite{Lin2022_MCODDM}. Moreover, OTFS modulation relies on block-based detection, which increases the detection latency \cite{2023Yuan_otfsSurvey}.

\IEEEpubidadjcol 

Another approach to waveform modulation uses chirp signals, which are complex exponential signals with linearly time-varying instantaneous frequencies. One of the earliest uses of chirp signals to mitigate channel dispersion was reported in 2001 \cite{martone2001_fracfourier}, in which the chirp signals are generated via fractional Fourier transform (FrFT). Using a similar concept, Ouyang and Zhao proposed OCDM waveforms for linear time-invariant (LTI) systems in 2016  \cite{Ouyang2016_OCDM}. OCDM waveforms use the Fresnel transform to modulate information onto orthogonal complex linear chirps. 
Recent works in the literature show that OCDM modulation can mitigate Doppler frequency shifts in doubly selective channels. Specifically, \cite{omar2021_ocdm} provided a numerical analysis of the performance of OCDM in different impaired wireless channels, including LTI and LTV channels. More recently, \cite{liu2024_ocdmMP, 2024_haifOCDM} analytically evaluated the performance of OCDM in doubly selective channels, and showed that OCDM-modulated signals may still experience chirp fading in some doubly selective channels, where two or more chirps overlap and become inseparable. Thus, high-complexity estimation and detection algorithms are required to address this issue.

AFDM is another chirp-based waveform for doubly dispersive channels \cite{2021_Benami, rou2024_mag}, which generalizes OCDM by introducing two tunable chirp parameters through the discrete affine Fourier transform (DAFT). Specifically, while OCDM modulates information symbols using the Fresnel transform with a fixed chirp rate, AFDM employs the DAFT basis functions with chirp parameters, commonly denoted as $c_1$ and $c_2$. By appropriately selecting these parameters based on the channel's delay-Doppler profile, AFDM ensures that all propagation paths are separable in the DAFT domain. Moreover, the work in \cite{2023_Benami} shows how to optimize the parameters to achieve full diversity in doubly selective channels. In this regard, both AFDM and OCDM exploit chirp-domain representations of the channel; however, AFDM addresses the chirp-domain fading limitation of OCDM by continuously optimizing the DAFT basis. The fundamental question of which sequence families possess the algebraic structure to transform a doubly dispersive channel into a singly dispersive one, and whether such sequences are unique, remains open. Answering this question is important because it provides theoretical insight into the design space of chirp-based modulations and reveals the structural properties that underpin the effectiveness of chirp-domain communication in time-frequency dispersive channels.

In this paper, we propose a novel chirp modulation based on Zadoff-Chu (ZC) sequences. We show that properly designing the chirp rate/root of the ZC sequence circumvents the chirp fading that is prevalent in OCDM systems. Moreover, we show that OCDM is a trivial case of the proposed ZC modulation. Thus, ZC modulation provides a design parameter to avoid selectivity in the chirp domain. 
Moreover, 
we demonstrate that ZC sequences exhibit a unique relationship with doubly selective channels that transforms the time and frequency dispersion into either a pure time dispersion or a pure frequency dispersion. 
This relationship enables the use of classical detection techniques for LTV systems, such as a maximum-likelihood sequence equalizer (MLSE) using Viterbi and Trellis-based detectors and linear minimum mean square error (LMMSE).  

The contributions of this article are as follows:

\begin{itemize}
\item We provide a unified theoretical framework of chirp signaling in doubly dispersive channels.
\item We introduce a new chirp modulation technique using ZC sequences, study its input-output relationship for doubly selective channels, and show that OCDM is a special (trivial) case of ZC modulations. 
\item We analyze the unique properties of ZC sequences in doubly selective channels. Specifically, we analytically show that modulating a signal using ZC sequences can transform the doubly dispersive channel effect into a pure dispersion channel in the chirp domain, and prove that ZC sequences are the only sequences in the constant amplitude zero autocorrelation (CAZAC) waveform family that have this property.

\item We highlight the key differences in using ZC over OCDM in terms of resolving the overlap of time-domain channel paths with distinct delay-Doppler profiles in the chirp domain, provide an analytical framework for designing ZC sequences for various wireless communication channels, and study the performance limits under practical considerations. 

\item We derive the diversity order of the proposed modulation techniques using ZC sequences for LTV channels.

\item We show that the transformed channel acts as a one-dimensional convolutional filter, so the receiver can reuse classical trellis-based detectors, such as the Viterbi algorithm, the Bahl-Cocke-Jelinek-Raviv (BCJR) algorithm (also known as the maximum a posteriori (MAP) algorithm), the log-MAP algorithm (the logarithmic domain implementation of the BCJR algorithm), the max-log-MAP algorithm (a reduced-complexity approximation of the log-MAP algorithm), and the soft output Viterbi algorithm (SOVA), rather than solving a two-dimensional delay-Doppler equalization problem.

\end{itemize}

This article is organized as follows. We study the fundamental challenges of doubly selective channels and review OCDM modulation in Section \ref{sec:ocdm}. We then introduce the ZC sequences for LTV channels in Section \ref{sec:zc}; we specifically study the system design and configuration, explore the favorable properties of ZC sequences for tackling the time-varying channels, and analyze the diversity order of the proposed modulation for the LTV channel. Finally, Section \ref{sec:results} provides the numerical results and further discusses the use of classical equalization methods in the chirp domain, as well as demonstrating the performance gap between OCDM and ZC modulation.

\section {Orthogonal Chirp Division Multiplexing in Linear-Time Variant Systems}
\label{sec:ocdm}
\noindent This section establishes the channel model for linear time-varying systems, with a specific focus on the effects of time and frequency dispersion, and then revisits the literature on the OCDM techniques used to mitigate dispersive channels.

\subsection{Linear Time-Variant Systems and Doubly Dispersive Channels}\label{sec:ltv}

Wireless communication channels are characterized by time-frequency variations of the channel strength \cite{Tse_Viswanath_2005}, resulting from multipath propagation and Doppler effects. Multipath propagation can lead to significant distortion of the received signal, causing signal dispersion in time. Similarly, movement of the transmitter, receiver, or objects in the environment causes the received signal to experience Doppler effects, causing frequency dispersion of the signal. Channels with frequency and time dispersions are referred to as doubly dispersive channels; such channels pose severe challenges for reliable communication systems.

The received continuous-time signal, $r(t)$, resulting from transmitting the signal $s(t)$ through a (noiseless) doubly dispersive channel of $L$ separable propagation paths, is a superposition of time- and frequency-shifted copies of the transmitted signal produced by scattering from reflective objects in the propagation environment. 
Each of these $L$ propagation paths can be characterized by several physical attributes, such as the time delay, $\tau_i$, Doppler frequency shift,  $\nu_i$, and complex fading gain, $h_i$,  $i\in\{0,\ldots, L-1\}$ \cite{matz_TF}. Therefore, the received continuous-time signal, $r(t)$, is given by 
 
\begin{equation}\label{eq:y_basic_time}
r(t)= \sum_{i=0}^{L-1} {h_i}   s\left(t - \tau_i\right)e^{j2\pi \nu_i \left(t - \tau_i\right)}  
\end{equation}
Here, we assume that the time shift occurs first, followed by the frequency shift. One can easily find an equivalent interpretation of the case where the frequency shift occurs first, followed by the time shift, which results in a phase difference, expressed as $e^{j2\pi \nu_i \tau_i}$, that can be incorporated in the complex fading gain, $h_i$.  From \eqref{eq:y_basic_time}, we can deduce the time-delay domain representation of the channel impulse response as 
\begin{equation}
h(t, \tau) = \sum_{i=0}^{L-1} {h_i}   \delta \left[\tau - \tau_i\right] e^{j2\pi \nu_i \left(t- \tau_i\right)} , \label{eq:cir_td}
\end{equation}
where $\delta[\cdot]$ denotes the Dirac delta function and $\tau$ is the delay domain variable. Hence, we can write the input-output relationship between $s(t)$ and $r(t)$ as
\begin{equation}
r(t)= \int_{-\infty}^{\infty}  s(t- \tau) h(t, \tau) d\tau. 
\end{equation}

It is often more convenient to work with the baseband discrete-time representation of the communication system. Hence, we assume the total bandwidth of $s(t)$ is $B$ with a frame duration of $T$ seconds, such that we have $N = BT$ complex samples per frame. Then, we take the sampling period as $T_s = \frac TN$. Hence, we sample $r(t)$ at time instants $t = k T_s, k=0,1,\ldots ,N-1$, to obtain the discrete-time received signal $r[k]$ as

\begin{equation} \label{eq:y_basic}
r[k] = \sum_{i=0}^{L-1} {h_i}   s\left[k - l_i\right]e^{j2\pi \kappa_i \left(k-l_i\right)}, 
\end{equation}
where $l_i = \frac{\tau_i}{T_s}$ and  $\kappa_i = \frac{\nu_i }{T_s}$ are the normalized delay and Doppler shifts. We assume that the number of samples is sufficiently large that there is no effect of fractional delay and Doppler on the performance, i.e.,  $l_i, \kappa_i \in \mathbb{Z}$.

Importantly, by appending a cyclic prefix (CP) to the transmitted signal, $s[k]$, the received signal in the concerned frame duration, $r[k]$, appears to be periodic of period $N$, and the channel operation can be simplified to cyclic convolution. Hence, we take the length of the CP to be larger than or equal to the total delay spread of the channel. 
We can also express the channel in \eqref{eq:cir_td} in the time domain in matrix form, as 
\begin{equation}\label{eq:channelMatrix}
\matr{H} = \sum_{i=0}^{L-1} {h_i}\matr{\Pi}^{l_i} \matr{\Delta}^{\kappa_i},
\end{equation}
where $\matr{\Pi}$, given by
\begin{equation}
{\matr{\Pi }} = {\left[ {\begin{array}{*{20}{c}}
0& \cdots &0&1\\
1& \ddots &0&0\\
 \vdots & \ddots & \ddots & \vdots \\
0& \cdots &1&0
\end{array}} \right]_{N \times N}},   \label{C4_TD_forward_cyclic_shift_matrix}
\end{equation}
is a circulant permutation matrix, equivalent to a forward cyclic shift by one sample in the time domain. In other words, $\matr{\Pi}^{l_i} $ operates a shift of $l_i$ samples in the time domain. Moreover,  $\matr{\Delta}=\textrm{diag}\{{\gamma}^0,{\gamma}^1,\ldots,{\gamma}^{N-1}\} $ is a diagonal matrix with $\gamma \buildrel \Delta \over = e^{\frac{j2\pi }{N}}$, and the operator $\textrm{diag}\{\vect{u}\}$ refers to the diagonal matrix with vector $\vect{u}$ on the main diagonal. The matrix $\matr{\Delta}$ operates a unit frequency shift in the frequency domain, and hence, $\matr{\Delta}^{\kappa_i}$ operates a shift of $\kappa_i$ units in the frequency domain.
Then, we can express the input-output relationship in the time domain as 
\begin{equation} \label{eq:timedomain_r}
\vect{r} = \matr{H} \vect{s}.
\end{equation}

Now that we have established the mathematical modeling of the doubly-dispersive channel, we next study OCDM modulation for linear time-variant systems.

\subsection{OCDM Modulation for LTV}\label{sec:ocdm_channel}
OCDM is a particular case of spread spectrum techniques that leverages the orthogonality of cyclically shifted linear chirp signals to maximize spectral efficiency at the Nyquist signaling rate \cite{Ouyang2016_OCDM}. 
Rather than modulating the signal on parallel frequency resources as in OFDM, OCDM technology synthesizes a large number of linearly frequency-modulated (LFM) (i.e., chirped) waveforms to modulate the signal \cite{ouyang2024_ce}. 

The theory of OCDM, as established in  \cite{Ouyang2016_OCDM}, depends on the Fresnel transform, which is ``the formula that mathematically describes the near-field optical diffraction." 
The discrete Fresnel transform (DFnT) matrix, 
 $\matr{\Phi}$, of size $N \times N$ is defined as  
\begin{equation}\label{eq:fdnt}
{\Phi}[m,n] = \frac{1}{\sqrt{N}} e^{-j \frac \pi 4}
\begin{cases}
 e^{j \frac{\pi (m-n)^2}{N}}, 		& N ~\text{even},\\
 e^{j \frac{\pi (m-n+\frac 12)^2}{N}}, 	& N~ \text{odd}.
 \end{cases}
\end{equation}
The DFnT matrix $\matr{\Phi}$ can be decomposed as $\matr{\Phi} = \matr{\Lambda} \matr{F} \matr{\Lambda}$, where $\matr{\Lambda} = \mathrm{diag}(e^{j\pi k^2 / N})$ and $\matr{F}$ is the $N$-point discrete Fourier transform (DFT) matrix. This decomposition reveals that the DFnT consists of a chirp pre-multiplication, a Fourier transform, and a chirp post-multiplication.
Since the DFnT matrix is unitary and circulant, the inverse DFnT (IDFnT) can be easily derived \cite{Ouyang2016_DFnT}. Moreover, we can diagonalize the DFnT matrix by the DFT matrix, which leads to a simple implementation of the DFnT \cite{Ouyang2016_OCDM}. 

To obtain the OCDM-modulated transmit signal, $\vect{s}$, we precode the data symbol vector, $\vect{x}$, with the IDFnT, to get
\begin{equation} \label{eq:timedomain_s}
\vect{s} =  \matr{\Phi}^H \vect{x}.
\end{equation}
Similarly, at the receiver, we obtain the discrete OCDM signal, $\vect{y}$, resulting from taking the DFnT of the discrete received signal, $\vect{r}$, of length $N$, as 
\begin{equation}\label{eq:chirpdomain_y}
\vect{y} = \matr{\Phi} \vect{r}.
\end{equation}

\noindent  By collecting \eqref{eq:timedomain_r}, \eqref{eq:chirpdomain_y} and \eqref{eq:timedomain_s}, the input-output relationship of the OCDM system is
\begin{equation}
\vect{y} = \matr{\Phi} \matr{H}  \matr{\Phi}^H \vect{x} = \matr{H}_c \vect{x},
\end{equation}
where $\matr{H}_c = \matr{\Phi} \matr{H}  \matr{\Phi}^H$. Here, $\matr{H}_c$ is the effective channel matrix in the chirp domain. We obtain a better understanding of the effects of modulating data symbols with OCDM by deriving the expression for $\matr{H}_c$, as detailed in Appendix \ref{apen:chirp_channel}. Specifically, we get  
\begin{equation}
\begin{split}   
&{H}_c \left[m,n\right] =\\
						&  \sum_{i=0}^{L-1} h_i e^{j \frac{\pi}{N}( 2m\kappa_i - 2l_i\kappa_i - \kappa_i^2)} 
						\delta[n-m+l_i +\kappa_i \Mod{N}], 
\end{split}
\label{eq:chirp_channel} \end{equation} 

\noindent where time-shifts are to be understood as modulo $N$. 
From the expression of ${H}_c \left[m,n\right]$ in \eqref{eq:chirp_channel}, we can deduce that the channel effect in the chirp domain is a complex fading with a coefficient $\tilde{h}_i = h_i e^{-j \frac{\pi}{N}( 2l_i\kappa_i + \kappa_i^2)} $, a delay of $l_i+\kappa_i$ units, and a frequency shift of $\kappa_i$. That is, a shift in the chirp domain is comprised of $\kappa_i$ units due to Doppler, and $l_i$ units due to the multipath time-shift. 
The expression of ${H}_c \left[m,n\right]$ reveals important information about the impact of the doubly-dispersive channel in the chirp domain using OCDM modulation. Namely, 
there is a possibility that two or more paths overlap in the chirp domain, i.e.,  $l_i+\kappa_i = l_j+\kappa_j$ for $i\neq j$ \cite{2024_haifOCDM, liu2024_ocdmMP}. Therefore, the performance of the OCDM modulation may degrade in some doubly dispersive channels due to chirp fading, where some of the delay-Doppler paths are unidentifiable (inseparable) in the chirp domain. \cite{2024_haifOCDM}.  

Next, we show that achieving identifiability (separability) in the chirp domain is possible by using ZC sequences to modulate information symbols.

\section{Zadoff-Chu modulation in Linear-Time Variant Systems}\label{sec:zc}
\noindent  This section develops the ZC modulation by using ZC sequences to modulate the information-bearing symbols to achieve better separability in the chirp domain.

\subsection{Key Features of Zadoff-Chu Sequences}
ZC sequences are complex sequences with a unit amplitude and polyphase  \cite{andrews2023_zc}.  
One key property of ZC sequences is that cyclically shifted versions of the sequences are orthogonal to one another, i.e., zero auto-correlation. A ZC sequence has two parameters: the sequence length, $N$, and the root index, $R, 1\le R <N$. The parameter $R$ must be in the multiplicative group of integers modulo $N$, that is, $R$ and $N$ must be co-prime.  Given the pair of parameters $(R,N)$, the ZC sequence is defined as 
\begin{equation}
\label{eq:defn}
z_R [k]  = \begin{cases}
			\exp\left[ -j\pi R \frac{k(k+1)}{N} \right], & \text{if $N$ is odd},\\
            		\exp\left[ -j\pi R \frac{k^2}{N} \right], & \text{if $N$ is even},
 \end{cases}
\end{equation}
where $k  = 0, 1,2, \ldots,N - 1$.   %

In addition to its CAZAC properties, a ZC sequence can transform a shift in the time domain onto a frequency shift in the chirp domain up to a unit modulus complex factor, and vice versa. 
Therefore, in this work, we claim that if a signal modulated by $z_R [k]$ is shifted in time, it is equivalent to the same signal, but with a frequency shift. Similarly, if the ZC-modulated signal is shifted in frequency, it is equivalent to the same signal with a time shift.
This property of the ZC-modulated signals is attractive in doubly selective channels because the signal's frequency and time shift can be transformed into either a pure time shift or a frequency shift. In other words, suppose  $x_R\left[k\right]$ is a ZC-modulated signal of length $N$. 
Then, when  $x_R\left[k\right]$ experiences a Doppler (frequency) shift by a multiple of $1/N$, it translates to a time shift of $P \in \{0,1,\ldots, N-1\}$ units. In Section \ref{sec:designR}, we show that the design of $R$ depends on $P$. Therefore, we have the following relation 
\begin{equation}\label{ZC_mod_time}
x_R\left[k\right]e^{\frac{j2\pi k}{N}} = \alpha x_R[k-P \Mod N], 
\end{equation}
\noindent where $\alpha$ is some complex multiplicative factor of unit modulus. Similarly, a unit time shift of $x_R\left[k\right]$ translates into a frequency shift of $\frac vN, v \in \{0,1,\ldots,N-1\}$ that depends intimately on $R$, up to a complex multiplicative factor of unit modulus $\alpha_f$. That is, 

\begin{equation}\label{ZC_mod_freq}
x_R\left[k - 1 \Mod N \right]= \alpha_f x_R[k]e^{\frac{j2\pi kv}{N}} . 
\end{equation}
The ZC-modulated signal in \eqref{ZC_mod_time} translates the two-dimensional shift into only a time shift; thus, the channel is equivalent to a multipath channel without Doppler components. The signal in \eqref{ZC_mod_freq} translates time and frequency shifts into only a frequency shift, resulting in a channel of Doppler shifts without delay components. Since the DFT dual of a ZC sequence is still a ZC sequence \cite{beyme2009_dftzc}, we have an equivalent multipath channel in the frequency domain. In what follows, we show that ZC sequences are the only CAZAC sequences that give the relationship in \eqref{ZC_mod_time} and \eqref{ZC_mod_freq}.

\subsection{Characterizing CAZAC sequences}
In this section, we prove that ZC sequences are the only finite-length CAZAC sequences that satisfy the relationship in \eqref{ZC_mod_time} and \eqref{ZC_mod_freq}. For the sake of rigor, we prove each relationship separately. However, one may choose to prove one case and use the Fourier transform and its time-frequency shift characteristic to prove the other. Moreover, starting from the relationship in  \eqref{ZC_mod_time} and then using the Fourier transform to prove \eqref{ZC_mod_freq} may lead to a more involved proof than the inverse approach. 
We start by proving the case where every frequency shift is translated into a time shift, as in  \eqref{ZC_mod_time}.

Let $\left\{a[k]\right\}_{k=0}^{N-1}$ be a finite-length cyclically CAZAC sequence of length $N$, that if modulated by a frequency shift of the form $k/N, k \in \{1,\ldots, N-1\}$ is equivalent, up to a multiplicative factor of unit modulus, $\alpha$, to the same sequence cyclically shifted in time. 
 In particular, for a frequency shift $\frac 1N$, the resulting sequence is shifted by $\tau \Mod N, \tau \in \mathbb{Z}_{\neq 0}$ samples in time, i.e.,
\begin{equation}
a[k] e^{\frac{j 2 \pi k}{N}}=\alpha a[k-\tau \Mod N], \quad 0 \leq k<N. \label{eq:freqtotime}
\end{equation}
The projection of frequency shifts onto time shifts can be generalized as follows: 
\begin{lemma}\label{lem:mfreqshift}
Given \eqref{eq:freqtotime}, where an elementary frequency unit shift leads to a time shift of $\tau  \Mod N$ units, then, a frequency shift of $\frac mN, m \in \mathbb{Z}_{\neq 0}$ leads to a time shift of $m\tau  \Mod N $-units and a constant multiplicative phase shift by $ \alpha^m e^{j2\pi\frac{m(m-1)\tau}{2N}}$. In other words, we obtain
\begin{equation}
a[k] e^{\frac{j 2 \pi mk }{ N}}=\alpha^m a[k-m\tau\Mod N] e^{j2\pi\frac{m(m-1)\tau}{2N}}. \label{eq:seq_m} \end{equation}

\end{lemma}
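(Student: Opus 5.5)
The proof will be an induction on $m$, repeatedly applying the single-step relation \eqref{eq:freqtotime} at shifted indices. The only fact needed beyond \eqref{eq:freqtotime} is that the exponential $e^{j2\pi n/N}$ depends only on $n \Mod N$. This lets us evaluate \eqref{eq:freqtotime} at the reduced index $k-m\tau \Mod N$ while keeping the exponent written as $k-m\tau$ without reduction. The case $m=1$ is exactly \eqref{eq:freqtotime}, since then $\alpha^1 e^{j2\pi\cdot 0}=\alpha$.

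For the inductive step from $m$ to $m+1$ with $m\ge 1$, multiply both sides of \eqref{eq:seq_m} by $e^{j2\pi k/N}$. The left side becomes $a[k]e^{j2\pi(m+1)k/N}$, and the right side contains $a[k-m\tau]e^{j2\pi k/N}$. Applying \eqref{eq:freqtotime} at index $k-m\tau$ gives $a[k-m\tau]e^{j2\pi(k-m\tau)/N}=\alpha\, a[k-(m+1)\tau]$. Hence
\begin{equation*}
a[k-m\tau]\,e^{j2\pi k/N}=\alpha\, a[k-(m+1)\tau]\,e^{j2\pi m\tau/N}.
\end{equation*}
Substituting this into the right side raises the power of $\alpha$ to $\alpha^{m+1}$. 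It also adds $m\tau/N$ to the phase exponent (in units of $2\pi$), turning $\frac{m(m-1)\tau}{2N}$ into $\frac{(m+1)m\tau}{2N}$. Equivalently, the accumulated phase is $\sum_{i=0}^{m}\frac{i\tau}{N}$, a triangular number, which is the source of the factor $\frac{m(m-1)}{2}$. Because $m(m-1)$ is even, this phase is a genuine $N$-periodic quantity, so no ambiguity arises from the modulo reduction.

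Since $m\in\mathbb{Z}_{\neq 0}$, negative $m$ must also be covered. I would first derive the case $m=-1$ by substituting $k\to k+\tau$ in \eqref{eq:freqtotime}. This gives $a[k+\tau]e^{j2\pi(k+\tau)/N}=\alpha\, a[k]$, that is,
\begin{equation*}
a[k]e^{-j2\pi k/N}=\alpha^{-1}a[k+\tau]\,e^{j2\pi\tau/N},
\end{equation*}
which matches \eqref{eq:seq_m} at $m=-1$ because $\frac{(-1)(-2)\tau}{2N}=\frac{\tau}{N}$. A downward induction then runs exactly as before: multiply by $e^{-j2\pi k/N}$ and use the $m=-1$ relation at index $k-m\tau$. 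The phase bookkeeping again telescopes to $\frac{m(m-1)\tau}{2N}$.

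The main obstacle is only careful phase bookkeeping: correctly pulling the extra factor $e^{j2\pi m\tau/N}$ out when re-indexing, and justifying that the cyclic reduction of indices does not disturb the exponentials. There is no conceptual difficulty beyond this.
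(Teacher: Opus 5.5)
Your proof is correct and follows the paper's argument: induct on $m$ from the base case \eqref{eq:freqtotime}, multiply the hypothesis by $e^{j2\pi k/N}$, and apply \eqref{eq:freqtotime} at the shifted index $k-m\tau$ to pick up the extra factor $\alpha e^{j2\pi m\tau/N}$. Your downward induction from the $m=-1$ relation is a worthwhile addition, since the paper only asserts the negative-$m$ case \eqref{eq:negm} without proving it.
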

\begin{proof}
 We use proof by induction. The base case for $m = 1$ is easily verifiable. 
Now, we form the hypothesis that for $m= n$, we get 
\begin{equation}\label{eq:hypoth}
a[k] e^{\frac{j 2 \pi nk }{ N}}=\alpha^n a[k-n\tau \Mod N] e^{j2\pi\frac{n(n-1)\tau}{2N}}. 
\end{equation}

In the induction step, we work with $m= n+1$. Our goal is to prove that 
\begin{equation}
a[k] e^{\frac{j 2 \pi (n+1)k }{ N}}=\alpha^{(n+1)} a[k-(n+1)\tau \Mod N] e^{j2\pi\frac{n(n+1)\tau}{2N}}. 
\end{equation}

We start with our hypothesis \eqref{eq:hypoth}, and multiply both sides of \eqref{eq:hypoth} by a frequency shift of $\frac 1N$, that is
\begin{align}
a[k] e^{\frac{j 2 \pi { (n+1)}k}{N}} &=\alpha^n a[k-n\tau \Mod N] e^{j2\pi\frac{n(n-1)\tau}{2N}} {e^{\frac{j 2 \pi k}{N}}}.\label{eq:induc}
\end{align}

Here, we digress to revisit the base case in \eqref{eq:freqtotime} and substitute with the following relationship $k\to k-n\tau$, and by isolating the term $a[k-n\tau]$ on one side, we obtain the desired result as
\begin{align}
a[{k-n\tau} \Mod N] &=\alpha a[k-(n+1)\tau\Mod N] {e^{\frac{-j 2 \pi ({k-n\tau}) }{ N}}}. \label{eq:step} 
\end{align}

Now, we can use \eqref{eq:step} in \eqref{eq:induc} to get 
\begin{align}
a[k] e^{\frac{j 2 \pi  (n+1)k}{N}} &=\alpha^{n+1}  a[k-(n+1)\tau \Mod N]  e^{\frac{j  \pi { n(n+1)\tau}}{N}}.
\end{align}

\end{proof}

It is easy to see that if $m<0$, then \eqref{eq:seq_m} becomes 
\begin{equation}
a[k] e^{-\frac{j 2 \pi mk }{ N}}=\tilde{\alpha}^m a[k+m\tau\Mod N] e^{j\pi\frac{m(m+1)\tau}{N}}, \label{eq:negm}
\end{equation}
where $\tilde{\alpha} = \alpha^{-1}$.

The conclusion we can draw here is that if a pure frequency shift of $\frac 1N$ leads to a $\tau$-unit time shifts, the $\frac mN$ frequency shift will lead to a $m\tau$-unit time shift and a constant multiplicative factor $\alpha^m e^{j2\pi\frac{m(m-1)\tau}{2N}}$.

Since all time shifts occur modulo $N$, then there exists a frequency shift of $\frac bN$ such that the time shift $b\tau \bmod N$ is the least positive non-zero time shift, which we denote for ease of notation using $d$. Then, we make the following proposition:

\begin{prop}\label{prop:leastfreq}
 Any time shift induced by a frequency shift must be an integer multiple of $d$. 
\end{prop}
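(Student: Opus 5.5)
By Lemma~\ref{lem:mfreqshift}, a frequency shift of $\frac mN$ produces a cyclic time shift of $m\tau \bmod N$. The set of time shifts that frequency shifts can induce is therefore
\begin{equation*}
S=\{m\tau \bmod N : m\in\mathbb{Z}\}.
\end{equation*}
This set is closed under addition and negation modulo $N$, because the frequency shifts themselves compose additively. Hence $S$ is a subgroup of $\mathbb{Z}_N$, namely the cyclic subgroup generated by $\tau$. The plan is to prove the proposition by the usual division-algorithm argument showing that a subgroup of $\mathbb{Z}_N$ is generated by its least positive element. Following the statement, $d$ denotes the least positive nonzero element of $S$, attained by some $b$ with $d = b\tau \bmod N$.

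Concretely, take any frequency shift $\frac mN$ whose induced time shift $t = m\tau \bmod N$ is nonzero, with $0<t<N$. Write $t = qd + r$ with $0\le r<d$. Now combine two shifts: the $\frac mN$ shift, and the $\frac{-qb}{N}$ shift supplied by Lemma~\ref{lem:mfreqshift} (or \eqref{eq:negm} for negative indices). Together they form the frequency shift $\frac{m-qb}{N}$, which by Lemma~\ref{lem:mfreqshift} induces a time shift of
\begin{equation*}
(m-qb)\tau \equiv t - qd \equiv r \pmod N .
\end{equation*}
If $r\neq 0$, this is a positive time shift smaller than $d$, contradicting the minimality of $d$. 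Hence $r=0$ and $t=qd$ is a multiple of $d$. If the induced time shift is $0$, it is trivially the multiple $0\cdot d$.

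The step needing the most care is justifying that the time shift associated with a frequency shift is well defined as a residue modulo $N$. One might worry that a single frequency shift could equal two different time shifts up to phase factors. To settle this, I will use the zero-autocorrelation property of the CAZAC sequence $a$. Suppose $a[k-t_1]$ and $a[k-t_2]$ agree up to a unit-modulus factor. Taking the cyclic autocorrelation at lag $t_1-t_2$ then gives full energy $N$, which is impossible unless $t_1\equiv t_2 \pmod N$. With that uniqueness in hand, the unit-modulus phase factors $\alpha^m e^{j2\pi m(m-1)\tau/(2N)}$ play no role in the shift bookkeeping, and the division argument above goes through directly. As a byproduct, $d=\gcd(\tau,N)$.
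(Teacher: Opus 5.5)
Your argument is correct and is essentially the paper's: Euclidean division by the minimal shift $d$, then composing with $-q$ copies of the frequency shift $\frac{b}{N}$ to produce a positive shift $r<d$ that contradicts minimality. The paper applies this only to the generator $\tau$ (via the shift $\frac{1-qb}{N}$) and then extends through Lemma~\ref{lem:mfreqshift}. You apply it directly to each residue $t=m\tau \bmod N$, which avoids having to pass from ``$\tau$ is a multiple of $d$'' to ``$m\tau \bmod N$ is a multiple of $d$'' (a step that implicitly needs $d\mid N$), and your zero-autocorrelation well-definedness check is a rigor addition the paper does not make.
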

\begin{proof} Since  Lemma \ref{lem:mfreqshift}  shows that any integer frequency shift systematically leads to an integer time shift that is a multiple of $\tau$, it suffices to show that $\tau$ is an integer multiple of $d$. In other words, if we express $\tau$ in terms of the Euclidean division, as 
 \begin{equation}
\tau = qd +r, \quad q,  r\in \mathbb{Z}, \quad 0\leq r<d,
\end{equation}
then, we need to show that $r$ must be zero. 

 Suppose we modulate the sequence by $\frac{1-qb}{N}$, which induces a time shift of $\tau - qd \bmod N$, which --- by definition --- is equal to $r$. If $r\neq0$, then we have a possible positive time shift that is strictly lower than the minimum shift $d$, which leads to a contradiction. Hence, $r$ must be zero, and we conclude that $\tau$ is indeed a multiple of $d$. 
 \end{proof}
 
It follows from Lemma \ref{prop:leastfreq} that all permissible shifts in time are integer multiples of $d$. Hence, we conclude that $N$ is also an integer multiple of $d$ because, otherwise, there is a frequency shift of $\frac{pd}{N}$, where $p$ is an integer, resulting in a time shift of $pd \Mod{N}$ which is non-zero and strictly lower than $d$. This contradicts the fact that $d$ is the minimum non-zero time shift. 

Since $d$ divides $N$, then $p \in \{0, 1, \ldots, \frac{N}{d}-1\}$ is sufficient to cover all cases. In fact, we can take $N = Qd$, where $Q$ is some integer, and hence it is sufficient to know the values of $a[0], a[1], \ldots, a[d-1]$ to characterize the full sequence, $a[k] $. Therefore, the rest of the sequence is simply a multiplicative factor of the first $d$ entries of the sequence, as dictated by the properties of time and frequency shift in \eqref{eq:freqtotime}. To reiterate the implication of Proposition \ref{prop:leastfreq}, the general form of a shift of $\frac bN$ is
\begin{equation}
\begin{split}
 a[k]  e^{\frac{j 2 \pi bk }{ N}} &=\alpha^b e^{j\pi\frac{(b-1)d}{N}} a[k-d\Mod N] \\&=\gamma a[k-d\Mod N], 
 \end{split}
 \label{eq:bshift}
 \end{equation}
where $\gamma=\alpha^b e^{j\pi\frac{(b-1)d}{N}}$ is a multiplicative factor independent of $k$. Then, for a given $l, 0\leq l <d$, we can generalize the expression for a  frequency shift of $-p \frac bN$, where $p$ is an integer, and we have 
  \begin{equation}
  \begin{split}
a[l+pd\Mod N]  &= a[l]  e^{-\frac{j 2 \pi pbl }{ N}} \gamma^{p} e^{-j\pi\frac{bd p(p+1)}{N}}\\
			& =  a[l]  e^{-\frac{j 2 \pi pbl }{ N}} \chi^{p} e^{-j\pi\frac{bd p(p-1)}{N}},
\end{split} \label{eq:minuspbshift}
 \end{equation}
where $\chi^{p}  = \alpha^{pb} e^{-j\pi\frac{(b+1)pd}{N}}$. 

In other words, \eqref{eq:minuspbshift} shows that knowing the first $d$ entries of $\vect{a}$ reveals its entire structure. 

\noindent To characterize the subsequence $a[l], 0 \leq l<d$, we start by finding the value of $d$.

\begin{lemma}\label{lem:value_d}
The least positive non-zero time shift, $d$, in the modulo $N$ sense for a sequence that projects a frequency shift into a time-shifted version of the same sequence scaled by a multiplicative factor must be one. 
\end{lemma}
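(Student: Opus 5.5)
The plan is to exploit the group structure of the time shifts generated in Lemma \ref{lem:mfreqshift}, together with the constant-amplitude property. I will argue by contradiction: if $d>1$, then some nonzero frequency shift must leave the sequence unshifted in time, and a unit-modulus sequence cannot satisfy that. Notably, only the constant-amplitude part of the CAZAC property, i.e., $a[k]\neq 0$ for every $k$, should be needed here.

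The first step is to identify $d$. By Lemma \ref{lem:mfreqshift}, the time shifts induced by frequency shifts $\frac mN$ are exactly the residues $m\tau \bmod N$, $m\in\mathbb{Z}$. These form the cyclic subgroup of $\mathbb{Z}_N$ generated by $\tau$, which is $g\mathbb{Z}_N$ with $g=\gcd(\tau,N)$. The least positive element of this subgroup is $g$, so $d=\gcd(\tau,N)$. This is consistent with Proposition \ref{prop:leastfreq} and with the subsequent observation that $d\mid N$. Write $N=Qd$ as in the text.

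The second step is to choose $m=Q$ in \eqref{eq:seq_m}. Since $d\mid\tau$, we get $Q\tau=\frac{N}{d}\tau\equiv 0 \Mod N$, and the lemma yields
\begin{equation*}
a[k]\, e^{\frac{j2\pi Qk}{N}} = c\, a[k], \qquad 0\le k<N,
\end{equation*}
where $c=\alpha^{Q}e^{j2\pi\frac{Q(Q-1)\tau}{2N}}$ does not depend on $k$.

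The third step uses the constant amplitude. Because $|a[k]|=1$, we may divide by $a[k]$ and obtain $e^{j2\pi Qk/N}=c$ for all $k$. Setting $k=0$ gives $c=1$, and then $k=1$ gives $e^{j2\pi Q/N}=1$, i.e., $N\mid Q$. Since $1\le Q=N/d\le N$, this forces $Q=N$, hence $d=1$. Equivalently, $\gcd(\tau,N)=1$, so $\tau$ is a unit modulo $N$, which is what the subsequent ZC characterization will require.

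I expect the main obstacle to be the first step: making rigorous that the minimal shift $d$ coincides with $\gcd(\tau,N)$, so that $Q\tau\equiv 0$ indeed holds for $m=Q$. This is a standard cyclic-subgroup fact. Alternatively, one can use directly that $d=b\tau \bmod N$ and $d\mid\tau$, then pick $m=Q\cdot(\tau/d)$, whose induced time shift is again $\equiv 0$. In that case the argument would need $m\not\equiv 0 \Mod N$ to reach the contradiction, and I would handle that detail there. The remaining steps are routine.
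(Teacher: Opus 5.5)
Your proof is correct, and it follows a different, more direct route than the paper.

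\emph{The paper's route.} It substitutes $p=Q$ into \eqref{eq:minuspbshift}. That is, it uses the frequency shift $b/d$ tied to the particular index $b$ that realizes $d$, and arrives at \eqref{eq:s_l}. It then runs a case analysis that allows some $a[l]$ to vanish: all samples zero, a single nonzero sample handled by downsampling, and a contradiction argument when two samples are nonzero.

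\emph{Your route.} You apply Lemma~\ref{lem:mfreqshift} with $m=Q$, i.e., the frequency shift $1/d$, which induces a zero time shift because $d\mid\tau$. Constant amplitude then lets you divide by $a[k]$, so $e^{j2\pi k/d}$ must be independent of $k$, which forces $d=1$.

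Your Step~1 is not really an obstacle. The facts $d\mid\tau$ and $d\mid N$ are already supplied by Proposition~\ref{prop:leastfreq} and the remark following it, so you do not need to identify $d=\gcd(\tau,N)$ yourself. You should also drop the alternative $m=Q\tau/d$: it is unnecessary, and it would indeed be $\equiv 0 \Mod N$ whenever $d^2\mid\tau$.

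\emph{What your route buys.} First, it never depends on the choice of $b$. The paper's shift $b/d$ is an integer whenever $d\mid b$, and then \eqref{eq:s_l} carries no $l$-dependence. For instance, $N=6$, $\tau=2$, $b=4$ gives $d=2$ but $Qb=12\equiv 0 \Mod 6$, so the paper's argument tacitly needs a suitable $b$. Second, it uses only $a[0],a[1]\neq 0$, which the CAZAC hypothesis guarantees, and it yields $d=1$ literally rather than only after downsampling.

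The paper's case analysis is aimed at possibly sparse sequences. For the CAZAC setting in which the lemma is actually used, your argument is complete and rigorous.
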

\begin{proof}

As a result of Proposition \ref{prop:leastfreq}, we take $p = Q$ in  \eqref{eq:minuspbshift}, which amounts to a shift of $\frac {Qb}{N}=\frac{b}{d}$ in frequency, and that has the effect a shift in time by $pd =Qd$, and leads to

  \begin{equation}
 a[l+Qd\Mod N]  = a[l] e^{-\frac{j 2 \pi l b }{d}}e^{-j\pi b(Q-1)}  \chi^{Q} 
  \end{equation}
 which, given that $Qd=N$, that indices are taken modulo $N$, we can rearrange to obtain
\begin{equation}
a[l] \left( 1- \eta^Q (-1)^{bQ}  e^{-j\frac{2\pi bl}{d}}  \right)=0,  \label{eq:s_l}
\end{equation}
where $0 \leq l<d$, and we can again group the multiplicative factor as $\eta^Q =(\chi  e^{j\pi b/Q})^Q$, since it does not depend on $l$. 

From \eqref{eq:s_l}, we have three possible cases. The first case is where $a[l]=0, 0 \leq l<d$, which suggests that the sequence is null, and thus we omit this trivial case from our treatment. 
In the next two cases, we assume that  $a[l], 0 \leq l<d$ is not  null; hence, we can re-write \eqref{eq:s_l} as
\begin{equation}
 \left(\eta (-1)^{b} \right)^Q =e^{j\frac{2\pi bl}{d}}.  \label{eq:s_l2}
\end{equation}
The second case is where only one sample of $a[l]$ is non-zero, i.e., there exists a unique value of $l$ such that $0 \leq l<d$, and  $a[l]\neq 0$. This case is equivalent to having $d=1$, since we can down-sample the sequence $a[k]$ by a factor $d$. That is, we keep the non-zero samples in $a[k]$, which are regularly spaced by $d$. In the third case, we assume that at least two or more samples of $a[l]$ are non-zero and, subsequently, $d\geq2$; for example, $l_1$ and $l_2$, where $0 \leq l_1<l_2<d$. 
Then, the right-hand side of \eqref{eq:s_l2}  does not depend on $l$ for those samples $a[l], 0\leq l<d$, for which $a[l]\neq0$. 
This means that if $l_{1}$ and $l_{2}$ are two of these indices, then $b\left(l_{2}-l_{1}\right)$ needs to be divisible by $d$, which is not possible, as we will demonstrate next. The objective then is to show that the only valid case is that there exists only one unique sample that is non-null, which, by downsampling by $d$, is equivalent to having $d=1$.

We start the proof by contradiction by equating the modulation terms of $a[l+pd\Mod N]$ in \eqref{eq:minuspbshift} with \eqref{eq:freqtotime} where $k \rightarrow l+pd\Mod N$. 

\begin{equation}
 e^{-j\frac{2\pi l}{N}} =e^{-\frac{j 2 \pi pbl }{ N}}.
\end{equation}

 By observing the exponent terms, we obtain

\begin{equation}
\frac{2\pi }{N} l \left(bp - 1\right) = c \Mod{2\pi}, \quad  0\leq l<d. \label{eq:mod2pi}
\end{equation}
If we take the difference between the two indices, $l_1, l_2$, substituted in \eqref{eq:mod2pi}, we get 

\begin{equation}
\frac{2\pi }{d} b\left(l_2-l_1\right)=0 \Mod{2\pi}, 
\end{equation}
which indicates that there exists an integer $v$, such that 
\begin{equation}
b \left(l_2-l_1\right)= v. \label{eq:pbminus1}
\end{equation}
 By taking the modulo $d$ of both sides of the previous equation, we end up with
\begin{equation}
l_{2}-l_{1}=0 \Mod{ N}.
\end{equation}
This equality is in contradiction with the fact that $0 \leq l_{1}<l_{2}<d$, and $0<l_{2}-l_{1}<d$.
 Therefore, we conclude that $d$ must be equal to 1.
\end{proof}

Thus far, we have shown the general form of a sequence, $a[k]$, that projects a frequency shift into a time shift, and such a sequence can be fully determined by $a[0], \ldots, a[d-1]$, and since $d$ must be $1$, therefore, the sequence, $a[k]$, is fully determined by $a[0]$. Now, we move to find $a[0]$ such that  $a[k]$ is a CAZAC sequence. In other words, the circular autocorrelation of $a[k]$, $\Gamma_v =\sum_{k=0}^{N-1} a[k]^{\ast} a[k+v]$, is zero for all $v \neq 0, i \in \mathbb{Z}, 0< v< N$.

\begin{theorem}\label{Th:freqShift}
The ZC sequences are the only family of  CAZAC sequences with the relationship in  \eqref{ZC_mod_time}.

\end{theorem}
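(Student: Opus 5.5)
Taking $d=1$ from Lemma \ref{lem:value_d}, the sequence is determined by $a[0]$ through the recursion, so my plan is to solve that recursion in closed form, impose the zero-autocorrelation condition, and match the result with \eqref{eq:defn}.

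Since $d=1$ is the least positive time shift and, by Proposition \ref{prop:leastfreq}, every induced shift is a multiple of $\tau$, $\tau$ itself must generate all of $\mathbb{Z}_N$. Hence $\gcd(\tau,N)=1$, and there is $b$ with $b\tau\equiv 1 \Mod N$. Applying \eqref{eq:minuspbshift} with $d=1$, $l=0$ and $p=k$ gives
\begin{equation*}
a[k] = a[0]\,\chi^{k}\, e^{-j\pi \frac{b k(k-1)}{N}}, \qquad 0\le k<N .
\end{equation*}
Constant amplitude then forces $|a[0]|$ to be constant and $|\chi|=1$, so I can write $\chi=e^{j\theta}$ and normalize $a[0]=1$.

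Next I would impose cyclic consistency. Since $a[k+N]=a[k]$, we need $\chi^N e^{-j\pi b(2k+N-1)}=1$ for all $k$. This forces $\chi^N=(-1)^{b(N-1)}$, so $\chi$ is a root of $\pm1$. After absorbing $\chi^k$ into the quadratic phase (completing the square), $a[k]$ becomes $e^{-j\pi R(k^2+ck)/N}$ up to a constant, with $R\equiv b$ (more precisely $R\equiv\tau^{-1}$) and an integer or half-integer linear term $c$. The linear term is only a frequency shift, which corresponds to a cyclic time shift of a ZC sequence. Taking the parity of $N$ into account reproduces the two branches of \eqref{eq:defn}: $k^2$ for even $N$ and $k(k+1)$ for odd $N$.

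To close the argument, I would compute the cyclic autocorrelation. We have $\Gamma_v=\sum_k a^*[k]a[k+v] = e^{j\phi_v}\sum_k e^{-j2\pi Rvk/N}$. This vanishes for all $0<v<N$ exactly when $\gcd(R,N)=1$, which is the ZC coprimality condition. For the converse direction, I would check directly that a ZC sequence satisfies \eqref{ZC_mod_time}: multiplying by $e^{j2\pi k/N}$ yields a shift of $P\equiv -R^{-1}\Mod N$.

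The main obstacle I expect is the consistency step. The mod-$N$ wraparound must be handled separately for even and odd $N$ and for the parity of $b$, to show that $\chi$ can only contribute an admissible linear phase term and never produces something outside the ZC family up to equivalence. My first approach there would be to evaluate the identity at $k=N-1\to 0$ explicitly and compare exponents modulo $2\pi$.
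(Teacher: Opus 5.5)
Your proposal is correct and follows essentially the same route as the paper: the closed form from \eqref{eq:minuspbshift} with $d=1$, periodicity $a[0]=a[N]$ to pin down the linear phase, the autocorrelation sum $\sum_k e^{-j2\pi bkv/N}$ to get $\gcd(b,N)=1$, and completing the square into the form \eqref{eq:endProofTh1}. Your observation that $b\tau\equiv 1 \Mod{N}$ already gives coprimality only turns the autocorrelation step into a check. One small slip in your converse: with $z_R[k]=e^{-j\pi Rk^2/N}$ as in \eqref{eq:defn}, the induced shift is $P\equiv R^{-1}\Mod{N}$, as in \eqref{eq:valueR_timeEven} and consistent with your own $R\equiv\tau^{-1}$, not $-R^{-1}$.
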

\begin{proof}

From Lemma \ref{lem:value_d}, we know $d=1$, and the sequence we are searching for to satisfy  \eqref{ZC_mod_time} is fully determined
by $a[0]$
in \eqref{eq:minuspbshift}, which can be re-written as 
\begin{equation}
 a[p\bmod N]  =\chi^{p} e^{-j\pi\frac{b p(p-1)}{N}} a[0]   =  \left( \chi  e^{j\frac{\pi b}{N}} \right)^p e^{-j\pi\frac{b p^2}{N}} a[0]. \label{eq:s_p}
\end{equation}
 Since we are looking at an $N$-periodic sequence, then we must guarantee that $a[0]=a[N]$, meaning that 
\begin{equation}
\begin{split}
 a[0] = a[N] &= \left(  \chi e^{j\frac{\pi b}{N}}  \right)^{N} e^{-j\pi b N} a[0] \\
 				  &= \left( \eta (-1)^{b }  \right)^{N} a[0],
 \end{split}
\end{equation}
where again we can write as $\eta^N =(\chi  e^{j\pi b/N})^N$. This expression of $a[0]$ 
 perpetuates that $ \left( \eta (-1)^{b}  \right)^{N} =1$, i.e., $ \eta (-1)^{b} = e^{j\frac{2\pi m}{N}}$, for some $m$, where  $0\leq m<N$.
 We can write $\eta  = (-1)^{b} e^{j 2 \pi \frac{m}{N}}$.
 We can substitute in \eqref{eq:s_p} to get 
 \begin{equation}
a[p]=(-1)^{bp} e^{j 2 \pi \frac{m p}{N}} e^{-j \pi \frac{b p^{2}}{N}} a[0], \quad 0 \leq p<N. \label{eq:sk_step5}
\end{equation}

Now that we have the expression for $a[k]$, we can find the autocorrelation---as follows

\begin{align}
\Gamma_{v} & =\sum_{k=0}^{N-1} a[k]^{\ast} a[k+v]\\
& 	\begin{aligned} =\sum_{k=0}^{N-1} &\left(e^{j\pi bk} e^{j 2 \pi \frac{m k}{N}} e^{-j \pi \frac{b k^{2}}{N}} a[0]\right)^{\ast}\\ &e^{j\pi b(k+v)} e^{j 2 \pi \frac{m (k+v)}{N}} e^{-j \pi \frac{b (k+v)^{2}}{N}} a[0] \end{aligned} \\
& =e^{j\pi bv} e^{j 2 \pi \frac{m v}{N}} e^{-j \pi \frac{b v^2}{N}}  \lvert a[0]\rvert^{2} \left(\sum_{k=0}^{N-1} e^{-j 2\pi \frac{b kv}{N}}\right). \label{eq:autocorr}
\end{align}

The autocorrelation for $v = 0$ is
\begin{equation}
\Gamma_{0} = N\lvert a[0]\rvert^{2} = N,
\end{equation}
which implies that $a[0]$ is a pure phase. That is, the modulus of $a[0]$ is unity. This value can be chosen arbitrarily without any loss of generality. In addition, to make the autocorrelation zero for $0<v<N$, $bv$ must be non-divisible by $N$. Thus, it is necessary and sufficient that $b$ be coprime with $N$.

 Since $\gcd(b,N) = 1$, there exists an integer $q$ such that $b q=m\Mod{N}$, which can help us simplify the expression of the searched sequence in  \eqref{eq:sk_step5} to

\begin{equation}
a[k] = e^{-j \pi bk \frac{k-N- 2q}{N}} a[0]. \label{eq:sk_step6}
\end{equation}

Let  $c=N\Mod 2$ be a parameter that characterizes the parity of $N$. Then, the sum $-2 q-N$, which has the parity of $N$, can be written as $c+2 m$, for some integer $m$. We can rewrite \eqref{eq:sk_step6} as

\begin{equation}
a[k]=e^{-j \pi b k \frac{k+c+2 m}{N}} a[0]. 
\end{equation}

Without loss of generality, we can take $a[0]=1$ (note that we can take any complex value with unit modulus). Then, we have

\begin{equation}\label{eq:endProofTh1}
a[k]=e^{-j \pi b k \frac{k+c+2 m}{N}}. 
\end{equation}
\end{proof}

\noindent Indeed, the expression in \eqref{eq:endProofTh1} is the general expression of a ZC sequence, and by taking $m = 0$, we have $a[k] = z_b[k]$ in \eqref{eq:defn}. In conclusion, Zadoff-Chu sequences are the only sequences that convert a shift in frequency into a shift in time, as seen in the relationship in \eqref{ZC_mod_time}.

We can follow a similar approach to the proof of Theorem \ref{Th:freqShift}, which we include in Appendix \ref{apen:timetoFreq}. Otherwise, since the DFT of ZC sequences is a ZC sequence, then---by taking the DFT of \eqref{ZC_mod_time}---we get a similar relationship to   \eqref{ZC_mod_freq}.

\begin{theorem}\label{Th:time2Freq}
The ZC  sequence is the only family of CAZAC sequences with the relationship in  \eqref{ZC_mod_freq}.
 
\end{theorem}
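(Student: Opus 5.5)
The plan is to reduce Theorem \ref{Th:time2Freq} to Theorem \ref{Th:freqShift} by duality, rather than repeat the lattice argument. Let $\{a[k]\}_{k=0}^{N-1}$ be a CAZAC sequence with
\begin{equation*}
a[k-1 \Mod N] = \alpha_f\, a[k] e^{\frac{j2\pi kv}{N}}
\end{equation*}
for some integer $v$ and some unit-modulus $\alpha_f$. Write $A[n]=\sum_k a[k]e^{-j2\pi kn/N}$ for its DFT. A DFT of a CAZAC sequence is again CAZAC up to the factor $\sqrt{N}$: constant amplitude of $A$ is equivalent to zero autocorrelation of $a$, and vice versa, by the Wiener--Khinchin relation. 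So it suffices to show that $A$ satisfies a relation of type \eqref{ZC_mod_time}.

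Taking the DFT of both sides, a unit cyclic delay on the left becomes multiplication by $e^{-j2\pi n/N}$, and modulation by $e^{j2\pi kv/N}$ on the right becomes a cyclic shift by $v$. This gives
\begin{equation*}
A[n]e^{-\frac{j2\pi n}{N}} = \alpha_f A[n-v \Mod N].
\end{equation*}
Conjugating, or equivalently replacing $n$ by $-n$ (the reversed sequence $A[-n]$ is still CAZAC), turns this into exactly \eqref{eq:freqtotime} with $\tau=-v$ or $v$. Here $v\neq 0$ is needed: if $v=0$, then $a$ would be a cyclic shift of itself up to a constant, so $|a|$ constant and zero autocorrelation at lag 1 would contradict $\Gamma_1=\alpha_f^{-1}N\neq 0$.

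Theorem \ref{Th:freqShift} then gives $A[n]=\beta e^{-j\pi b n(n+c+2m)/N}$ with $\gcd(b,N)=1$, after undoing the reflection, which only changes $b\to b$ and $m$. The final step is to invert the DFT of a ZC sequence with coprime root, using the known fact cited as \cite{beyme2009_dftzc}. Completing the square in the Gauss sum shows that the inverse DFT equals a ZC sequence of root $b^{-1}\bmod N$, more precisely $-b^{-1}$ up to sign conventions, times a unit constant and a linear phase. A linear phase $e^{j2\pi m'k/N}$ is absorbed into the parameter $m$ in the general form \eqref{eq:endProofTh1}. Hence $a[k]=e^{-j\pi R k(k+c+2m')/N}$ with $\gcd(R,N)=1$, which is a ZC sequence.

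The main obstacle is the last step for even $N$, where $b^{-1}$ must be handled carefully. In that case $b$ is odd, and the half-integer phases $e^{-j\pi bk^2/N}$ require the Gauss-sum evaluation to produce exactly $e^{j\pi b' k^2/N}$ with $bb'\equiv 1 \pmod{2N}$ rather than only mod $N$. I would treat this by choosing the inverse mod $2N$, which exists since $b$ is odd and coprime to $N$, and verify by direct substitution.

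If this becomes messy, the fallback is to mirror the paper's direct route. First prove the analogue of Lemma \ref{lem:mfreqshift}: an $m$-unit delay corresponds to an $mv$-unit frequency shift, with quadratic phase $e^{-j\pi m(m-1)v/N}$ up to sign. Iterating from $a[0]$ then gives $a[-m]=\alpha_f^m e^{j2\pi v m(m+1)/(2N)}a[0]$ directly, with no analogue of the $d$ argument needed, since a unit time step already reaches every index. Periodicity $a[N]=a[0]$ and the autocorrelation computation exactly as in \eqref{eq:autocorr} then force $\gcd(v,N)=1$ and the ZC form.
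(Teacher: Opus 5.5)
Your main argument is correct, and it takes a genuinely different route from the paper. The paper proves Theorem~\ref{Th:time2Freq} directly in Appendix~\ref{apen:timetoFreq}, which is essentially your fallback with $P$ in place of your $v$. Lemma~\ref{lem:mtimeshift} iterates \eqref{eq:time2freq} to express every $a[m]$ through $a[0]$. Periodicity $a[N]=a[0]$ then forces the accumulated linear-phase constant to be an $N$-th root of unity, and the autocorrelation \eqref{eq:autocorre_case2} forces $\gcd(P,N)=1$. No analogue of Lemma~\ref{lem:value_d} is needed, since unit delays already reach every index.

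You instead move the hypothesis through the DFT, followed by a reflection or conjugation, into the form \eqref{eq:freqtotime}. You then invoke Theorem~\ref{Th:freqShift} and pull the answer back with the Gauss-sum evaluation of the DFT of a ZC sequence. Each step checks out:
\begin{itemize}
\item the DFT maps CAZAC to CAZAC;
\item $v\neq 0$ is forced by $\Gamma_1\neq0$;
\item the pulled-back root is $-b^{-1}$, and the mod-$2N$ inverse handles even $N$ (any leftover $(-1)^k$ is a linear phase absorbed into $m$ anyway).
\end{itemize}

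Your route buys structure. It shows the two relations are Fourier duals, and the root inversion $b\mapsto -b^{-1}$ under the DFT explains why the design rules read $P=R^{-1}$ for \eqref{ZC_mod_time} and $P=R$ for \eqref{ZC_mod_freq}.

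It costs self-containment. Theorem~\ref{Th:time2Freq} now depends on the $d=1$ step (Lemma~\ref{lem:value_d}), which is the least transparent part of the paper's argument. The step is nonetheless true for constant-amplitude sequences. Iterating \eqref{eq:freqtotime} $N/\gcd(\tau,N)$ times via Lemma~\ref{lem:mfreqshift} gives $a[k]e^{j2\pi k/\gcd(\tau,N)}=C\,a[k]$ with $C$ independent of $k$. Since $a[0],a[1]\neq 0$, this forces $\gcd(\tau,N)=1$. Your route also needs an extra Gauss-sum computation. The paper remarks that going from the frequency-to-time case to the time-to-frequency case via the Fourier transform is the more involved direction, which is why it proves Theorem~\ref{Th:time2Freq} from scratch.

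There is one small slip in your fallback. Iterating $a[k-1]=\alpha_f a[k]e^{j2\pi kv/N}$ at $k=0,-1,-2,\dots$ gives $a[-m]=\alpha_f^{m}e^{-j\pi vm(m-1)/N}a[0]$, not $\alpha_f^{m}e^{j\pi vm(m+1)/N}a[0]$; your version is already wrong at $m=1$. The sign error would flip the root to $-v$, but the conclusion that $a$ is a ZC sequence with a root coprime to $N$ is unaffected.
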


\begin{proof}
See Appendix \ref{apen:timetoFreq}.
\end{proof}
Now that we have demonstrated the unique properties of the ZC sequences, we describe their performance in doubly selective channels. 

\subsection{ZC-modulation for LTV systems}\label{sec:designR}
The ZC sequences,  \eqref{eq:defn}, have two parameters, $N$ and $R$. 
If we take $R=1$, the ZC sequence boils down to the OCDM modulation used in Section \ref{sec:ocdm}. Thus, properly designing the parameter $R$ according to the channel memory would better mitigate the unresolvable propagation paths. In what follows, we describe the design of the parameters and then characterize the effective channel of the ZC-modulated signals. 
\subsubsection{Designing the ZC parameters}
There are two ways to design the ZC sequences, depending on whether we wish to transform the doubly dispersive channel into an equivalent time-dispersive channel by using the relationship in \eqref{ZC_mod_time}, or into a frequency-dispersive channel by exploiting the relationship in \eqref{ZC_mod_freq}.

If we wish to transform the doubly selective channel into a channel with only time-dispersion, we translate each frequency shift that is a multiple of $1/N$ into a time shift by $P$ units, where $P$ is the maximum delay spread. Therefore, we design $R$ that satisfies \eqref{ZC_mod_time}. We can rewrite \eqref{ZC_mod_time} as 
\begin{equation} \label{eq:designR_timeEven}
  e^{-j\pi R \frac{k^2}{N} } e^{\frac{j2\pi k}{N}} =  e^{\Psi}  e^{ -j\pi R \frac{(k-P)^2}{N} }, 
 \end{equation}
when $N$ is even, and 
\begin{equation}  \label{eq:designR_timeOdd}
  e^{-j\pi R \frac{k(k+1)}{N} } e^{\frac{j2\pi k}{N}} =  e^{\Psi}  e^{ -j\pi R \frac{(k-P)(k-P+1)}{N} }, 
 \end{equation}
 when $N$ is odd. By solving either \eqref{eq:designR_timeEven} or \eqref{eq:designR_timeOdd} for $R$, we find that 
\begin{equation}\label{eq:valueR_timeEven}
P = R^{-1} \Mod{N}.
 \end{equation}
Here, $R^{-1}$ denotes the modular multiplicative inverse of $R$ modulo $N$, i.e., the unique integer $P \in \{1, 2, \ldots, N-1\}$  such that $RP \equiv 1 \pmod{N}$. This inverse exists because $R$ and $N$ are coprime by construction. 
The design rule in \eqref{eq:valueR_timeEven} simply says that given the maximum delay index $l_{\max}$ and Doppler index $\kappa_{\max}$, select the smallest $P$ coprime with $N$ that satisfies $P>l_{\max}$ and $2P\kappa_{\max}+l_{\max}<N$, then obtain $P = R^{-1} \Mod{N}$ by the extended Euclidean algorithm at a cost of $\mathcal{O}\left( \log^2(N)\right)$ operations. The first condition prevents paths with distinct delays from colliding, the second prevents the projection from wrapping, and the two are jointly feasible whenever $(l_{\max}+1)(2P\kappa_{\max}+1)\leq N$. 

Now, we can deduce the phase of the multiplicative factor to be $\Psi = \frac{\pi RP^2}{N}$ for even values of $N$, and $\Psi = \frac{\pi RP(P-1)}{N}$ for odd values of $N$. On the other hand, if we wish to work with a frequency-dispersive channel, we transform each unit delay shift into a frequency shift by $\frac PN$. Hence, we can express \eqref{ZC_mod_freq} as 

\begin{equation} \label{eq:designR_freqEven}
  e^{-j\pi R \frac{(k-1)^2}{N} } =  e^{\Psi}  e^{ -j\pi R \frac{k^2}{N} } e^{\frac{j2\pi kP}{N}}, 
 \end{equation}
when $N$ is even, and 
\begin{equation}  \label{eq:designR_freqOdd}
  e^{-j\pi R \frac{k(k-1)}{N} }  =  e^{\Psi}  e^{ -j\pi R \frac{k(k+1)}{N} }e^{\frac{j2\pi kP}{N}}, 
 \end{equation}
 when $N$ is odd. Solving \eqref{eq:designR_freqEven} and \eqref{eq:designR_freqOdd} yields 
 \begin{equation}
P = R \Mod{N}, 
 \end{equation}
 and we find that the phase of the multiplicative factor is $\Psi = \frac{-\pi R}{N}$ for even values of $N$, and $\Psi = 0$ for odd values of $N$.

We study the effective channel after we have translated the doubly-dispersive channels onto a time-dispersive equivalent channel using the ZC sequences, as outlined in  \eqref{ZC_mod_time}.

\subsubsection{ZC Channel as a Convolutional Filter}
In what follows, we focus on the first scenario, where we translate any Doppler shift into a time shift of $P$ units. We show that we can express the received signal $\vect{r}$ as a convolutional filter of the transmitted signal $\vect{x}$ and the channel in the chirp domain $\matr{H}_{\text{ZC}}$, which we define shortly. 
Unlike other waveforms designed for doubly dispersive channels, such as OTFS, viewing $\matr{H}_{\text{ZC}}$ as a convolutional filter allows the construction of a trellis structure for the communication system. 
As such, we can implement a maximum likelihood receiver using the efficient implementation of the Viterbi algorithm, or soft-output receivers using the BCJR, log-MAP, max-log-MAP, or SOVA algorithms. 
The receiver, therefore, detects symbol by symbol as the frame arrives, with latency set by the channel memory rather than by the frame length as done in OTFS systems. Once the receiver projects the frame onto the ZC basis, it detects symbols sequentially along the chirp-domain index and releases decisions after a traceback delay set by the channel memory rather than after a block detector converges over the entire frame.

Using a similar approach to Section \ref{sec:ocdm}, a signal $\vect{x}$ modulated by the cyclic-inverse of a ZC sequence of parameters $R = P^{-1}\Mod{N}$ and length $N$ can be expressed as 
\begin{equation}
\vect{s} = \matr{C}_R^H\vect{x},
\end{equation}
where 
\begin{equation}\label{eq:zcmatrix}
C_R[m,n] = \frac{1}{\sqrt{N}} z_R[m-n]^*= \frac{1}{\sqrt{N}} e^{j \frac{\pi R (m-n)^2}{N}}, 
\end{equation}
and we normalize by $\sqrt(N)$ to keep the energy per transmitted symbol $x[k]$ equal to unity.
Then, we can write the input-output relationship of a ZC-modulated communication system passing through the doubly selective channel in \eqref{eq:channelMatrix} as 
\begin{equation}\label{eq:noNoise_IO_zc}
\vect{y}_{\text{ZC}}  = \matr{C}_R \vect{r}  = \matr{C}_R \matr{H}  \matr{C}_R^H \vect{x} = \matr{H}_{\text{ZC}} \vect{x},
\end{equation}
where $ \matr{H}_{\text{ZC}}$ is the channel matrix modulated by the ZC sequence $\vect{z}_R$ of length $N$. 
We can expand the expression of $ \matr{H}_{\text{ZC}}$ to
\begin{equation}
\begin{split}
{H}_{\text{ZC}} \left[m,n\right] 	&= \sum_{i_1=0}^{N-1}\sum_{i_2=0}^{N-1} {C}[m, i_1]  H[i_1, i_2] {C}^{\ast}[i_2,n]\\
						&\begin{aligned} =\frac 1N e^{\frac{j\pi R(m^2-n^2)}{N}} &\sum_{i_1=0}^{N-1}\sum_{i_2=0}^{N-1}  \sum_{i=0}^{L-1} h_i \delta[i_1-i_2-l_i] \\& e^{j \frac{\pi R }{N} \left(  i_1^2 -i_2^2 + 2i_2n - 2i_1m\right)} e^{j \frac{2 \pi i_2 \kappa_i}{N} } \end{aligned} 
\end{split}\end{equation}
By following the same steps as in Appendix \ref{apen:chirp_channel}, where we take $i_2 = i_1-l_i$, we obtain 
\begin{equation}
\begin{split} {H}_{\text{ZC}} \left[m,n\right] 	=&
\frac 1N e^{\frac{j\pi R(m^2-n^2)}{N}} \sum_{i=0}^{L-1} h_i  e^{j\frac{\pi}{N} (2l_i\kappa_i -  Rl_i^2 - 2nRl_i )} \\&
						\sum_{i_1=0}^{N-1}  e^{j \frac{2 \pi R i_1}{N}(n - m + l_i )}  e^{j \frac{2 \pi i_1 \kappa_i}{N}} .
                        \end{split}\label{eq:zc_Hzc}
\end{equation}

To simplify the last summation term, we note that 

\begin{equation}\label{eq:Dirichlet}
\begin{split}
\mathcal{D}_i(m, n)&=\sum_{i_1=0}^{N-1}  e^{j \frac{2 \pi i_1}{N} \left( R(n - m + l_i ) +  \kappa_i\right)}\\& =\begin{cases}
						N, & ~\text{if}~ R(n - m + l_i ) +\kappa_i = 0 \Mod{N}\\
						0, & ~ \text{otherwise}.
						\end{cases}
\end{split}  
\end{equation}

\noindent Then, we obtain the following form for  $ \matr{H}_{\text{ZC}}$
\begin{equation}
\begin{split}
  {H}_{\text{ZC}} \left[m,n\right] 	=&  e^{\frac{j\pi R(m^2-n^2)}{N}} \sum_{i=0}^{L-1} h_i  e^{j\frac{\pi}{N} (2l_i\kappa_i -  Rl_i^2 - 2nRl_i )} \\& \delta[ R(n - m + l_i ) +\kappa_i  \Mod{N}].  
\end{split}  \label{eq:setp_b} 
\end{equation}

\noindent To confirm that OCDM is a special case of a ZC-modulated system, take $R=1$ in \eqref{eq:setp_b}, then we have \eqref{eq:ocdm_Ris1}.


\subsubsection{Characterizing the Contributing Paths of the Virtual Channel}\label{sec:ZCchannel}
From  \eqref{eq:setp_b}, we can see that a path, $i$, contributes to the effective channel if and only if $R(n - m + l_i ) +\kappa_i = 0 \Mod{N}$. Then, we can say that the effective channel of a ZC-modulated system is a function of the values of $l_i, \kappa_i$, in relation to $N$ and $R$.
 
To simplify the expression of $\matr{H}_{\text{ZC}}$ in \eqref{eq:setp_b}, we characterize which path $i$ contributes to the channel. 
Equivalently, we have $n - m + l_i  +R^{-1} \kappa_i=0 \Mod{N}$ or $n = m - l_i  -R^{-1} \kappa_i \Mod{N}$. 
Suppose that all paths are such that $l_i \in\{0,1, \ldots, P-1\}$ for some $P$ and $\kappa_i \in\{0,1, \ldots, G-1\}$. Therefore, there exists a maximum of $PG$ paths.

On the one hand, if we do not want the contributions of the $PG$ paths to overlap in the determination of $\matr{H}_{\text{ZC}}$, so that each path is resolvable and separable, we need $n = m - l_i  -R^{-1} \kappa_i \Mod{N}$ to take different values for all $l_i$  and $\kappa_i$. On the other hand, if we want to use a trellis-based maximum likelihood equalizer, we need to compact the virtual channel to minimize the complexity. Recall from \eqref{eq:valueR_timeEven}, we designed $R = P^{-1}$, and since $N$ is even, and often taken as a pure power of 2, this means that $P$ should always be taken as odd in order to be coprime with $R$. 
If the above conditions are satisfied, then \eqref{eq:setp_b} can be simplified to 
\begin{equation}\label{eq:zc_channel}
\begin{split}
{H}_{\text{ZC}} \left[m,n\right] = \sum_{i=0}^{L-1} h_i  e^{-j \frac{\pi}{N}\kappa_i^2P} e^{-j \frac{2\pi}{N} l_i \kappa_i} e^{j \frac{2\pi}{N} m \kappa_i} \\ \delta[n - m + l_i  +P \kappa_i\Mod{N}]. \end{split}
\end{equation}

 \noindent This formula is also valid for any choice of $P$, smaller than the maximum delay spread but with some possible overlapping, meaning that more than one path may contribute to a tap of the virtual channel. The design therefore degrades gracefully under support mismatch. Underestimating $l_{\max}$ shrinks $P$ and reintroduces collisions, recovering OCDM behavior in the limit $P=1$; overestimating it preserves path separability but lengthens the effective channel memory. We accordingly bias $P$ upward whenever the delay spread is uncertain, and since the channel memory may be large, we adopt suboptimal but affordable-complexity algorithms, like LMMSE, whose complexity does not grow with the memory. 
 The virtual channel of a ZC-modulated system in \eqref{eq:zc_channel} shows that the contribution of the $i$-th path in the virtual channel has a complex fading of $\tilde{h}_i = h_ie^{-j \frac{\pi}{N}\kappa_i^2P} e^{-j \frac{2\pi}{N} l_i \kappa_i}$, a time-shift of $l_i  +R^{-1} \kappa_i \Mod{N}$, and varies in time with a pure frequency of $\frac{\nu_i}{N}$, which is identical to the Doppler shift in frequency. 
Hence, if there are two or more contributions to the $i$-th tap ${H}_{\text{ZC}} \left[m,n\right] $, we can have a superposition of two or more sinusoidal signals with different frequencies.


\paragraph{Fractional Doppler Considerations}
The effective channel in \eqref{eq:zc_channel} represents the assumption that the delay-Doppler shifts align with the sampling grid. We adopt this on-grid assumption for analytical tractability, following the convention in the OTFS and AFDM literature \cite{2023_Benami, Ravi2018_MPA, Ravi2019_EPA}, which is a fair approximation when the delay and Doppler resolutions are sufficiently small that real values can be approximated to the nearest integer. The assumption is nonetheless optimistic for practical Doppler values, which motivates the fractional-Doppler analysis that follows. 
Hence, we study the ZC-based modulation in the presence of fractional Doppler values. 
That is, we assume that the Doppler shift can be expressed as $\kappa_i = \frac{\nu_i + v_i  }{T_s}$, where $v_i \in (-\frac{1}{2}, \frac{1}{2}]$ is the fractional residue. Having a fractional Doppler affects the expression of the channel matrix, specifically \eqref{eq:Dirichlet}, which is only true for integer Doppler values. Moreover, the extent to which fractional Doppler degrades the effective channel structure depends strongly on the pulse-shaping filter.
We present the fractional delay-Doppler study for the rectangular filter in the frequency domain, corresponding to a sinc function in the time domain and to a square-root raised-cosine filter with roll-off factor $\beta = 0$, which is a rather pessimistic scenario. 

After matched filtering, the effective discrete-time channel incorporates pulse shaping by sampling the combined Nyquist pulse at Doppler-shifted instants. We  can generalize \eqref{eq:Dirichlet} as 
\begin{equation} \label{eq:dirichlet_kernel}
\begin{split}
\mathcal{D}_i(m, n)\big|_{\beta=0}  &= \sum_{i_1=0}^{N-1} e^{j\frac{2\pi i_1}{N}\left(R(n - m + l_i) + \nu_i + v_i \right)} \\
&= \frac{\sin\!\left(\pi\left(R(n-m+l_i) + \nu_i + v_i \right)\right)}{\sin\!\left(\frac{\pi}{N}\left(R(n-m+l_i) + \nu_i + v_i \right)\right)}, 
 \end{split}\end{equation}
whose magnitude peaks when $R(n-m+l_i)+\nu_i + v_i \approx0 \Mod{N}$ and decays as the argument moves away from this point, with the rate of decay increasing with $N$. This slow decay is attributed to the choice of $\beta = 0$, and any practical roll-off $\beta > 0$ yields faster decay and hence a more compact effective channel. 
Consequently, each physical propagation path no longer maps to a single tap in the chirp domain, but instead contributes a spread of non-zero entries centered at the nominal integer position $n=m-l_i-P\nu_i \Mod{N}$. 

 The effective channel \eqref{eq:zc_Hzc} under fractional Doppler can then be written as 
\begin{equation} 
\begin{split}
&H_{\text{ZC}}^{\text{frac}}[m,n] \\&\approx e^{j\frac{\pi R(m^2 - n^2)}{N}} \sum_{i=0}^{L-1} h_i \, e^{j\frac{\pi}{N}(2l_i\kappa_i - Rl_i^2 - 2nRl_i)}  \frac{\mathcal{D}_i(m,n)\big|_{\beta=0} }{N}. \label{eq:Hzc_frac} 
\end{split}  \end{equation}

Following a similar methodolgy to  \cite{Ravi2018_MPA} and \cite{2023_Benami}, we adopt a truncation threshold $k_v$ chosen such that $|\mathcal{D}_i (m,n)|/N$ falls below a sensitivity floor for $|n-(m-l_i-P \nu_i)|>k_v$, as such each path effectively occupies $2k_v+1$ virtual taps. Thus, we can approximate the effective channel as 
\begin{equation} 
\begin{split}
&H_{\text{ZC}}^{\text{frac}}[m,n] \approx\sum_{i=0}^{L-1}\\&  \sum_{q=-k_\nu}^{k_\nu} \tilde{h}_i^{(q)} e^{j \frac{2 \pi m\left(\nu_i+q\right)}{N}} \delta\left[n-m+l_i+P\left(\nu_i+q\right) \Mod{N}\right]
\end{split}  \end{equation}
where $\tilde{h}_i^{(q)}$ is path-dependent weighting coefficients capturing the leakage of the fractional Doppler. 
This behavior is structurally identical to that of AFDM and OTFS under fractional Doppler \cite{2023_Benami}, where the path location similarly widens from a single index to an interval of width $2k_v+1$. In fact, none of the three schemes is immune to this spreading, and the degree of spreading depends on the magnitude of $v_i$, the frame length $N$, and the filter. The primary implication for the ZC-modulated system is that the trellis-based MLSE receiver, which benefits from the exact convolutional form of \eqref{eq:setp_b}, becomes approximate under fractional Doppler as the effective channel memory grows.

Overall, we select the ZC slope/root parameter to minimize the overlap of the propagation paths in the virtual domain, at the cost of increasing the channel memory.


\subsection{Relationship to Affine Frequency Division Multiplexing}
As a DAFT-based chirp waveform, AFDM, like ZC-based modulations, generalizes OCDM through tunable chirp parameters and circumvents chirp-domain fading. In what follows, we discuss the relationship between ZC-based modulation and AFDM\footnote{We limit the comparison to AFDM in the paper, and we refer interested readers to \cite{deng2025_survey, rou2024_mag} to study the comparison between OTFS (and delay-Doppler waveforms) and chirp communication.}. 

AFDM modulates information symbols using the inverse DAFT (IDAFT). A DAFT matrix of a length $N$ can be expressed as a composition of three operations, given by
 \begin{equation}\label{eq:afdm}
\mathbf{A}=\matr{\Lambda}_{c_2} \matr{F} \matr{\Lambda}_{c_1}.
\end{equation}
where $\matr{\Lambda}_{c_1}=\operatorname{diag}\left(e^{-j 2 \pi c_1 k^2}\right)$ is the pre-chirp diagonal multiplication, and  $\matr{\Lambda}_{c_2}=\operatorname{diag}\left(e^{-j 2 \pi c_2 n^2}\right)$ is the post-chirp diagonal multiplication, where $c_1$ and $c_2$ are continuous-valued parameters. 
Similarly, the ZC modulation matrix $\mathbf{C}_R$ in \eqref{eq:zcmatrix} can be decomposed by expanding the quadratic phase of the ZC sequence. Specifically, we can write
\begin{equation}\label{eq:zcmatrix2}
\mathbf{C}_R=\matr{\Lambda}_R \matr{F}_R \matr{\Lambda}_R,
\end{equation}
where $\matr{\Lambda}_R=\operatorname{diag}\left(e^{j \pi R k^2 / N}\right)$ and $\matr{F}_R$ is a matrix with entries $(1 / \sqrt{N}) e^{-j 2 \pi R m n / N}$. Since $R$ is coprime with $N$, the matrix $\matr{F}_R$ is a column-permuted version of the standard DFT matrix. 
As we compare \eqref{eq:afdm} and \eqref{eq:zcmatrix2}, we observe that the DAFT uses two independent continuous parameters, $c_1$ and $c_2$, for the pre-chirp and post-chirp stages. In contrast, the ZC modulation couples both chirp stages and the frequency mapping through a single integer parameter, $R$. This coupling is a consequence of the algebraic structure of ZC sequences, which, as shown in Theorems \ref{Th:freqShift} and  \ref{Th:time2Freq}, is necessary and sufficient for a CAZAC sequence to convert a two-dimensional channel dispersion into a one-dimensional dispersion.

Moreover, the columns of $\mathbf{C}_R^H$ are cyclically shifted ZC sequences, which, by definition, are guaranteed to be CAZAC sequences for any admissible value of $R$. 
In contrast, the columns of the IDAFT matrix are chirp sequences of the form $(1 / \sqrt{N}) e^{j 2 \pi\left(c_1 k^2+k n / N+c_2 n^2\right)}$, which satisfy the constant-amplitude condition by construction since the basis functions are complex exponentials with unit modulus; however, the zero autocorrelation property depends on the specific choice of $c_1$ and $c_2$ and is not guaranteed unless the parameter values are chosen appropriately.
The CAZAC property of the ZC basis is practically significant because it ensures ideal periodic autocorrelation, regardless of the choice of $R$.
This feature makes ZC sequences the standard preamble and reference-signal family in deployed systems, and it suggests that a ZC-modulated frame could reuse one sequence for synchronization, channel estimation, and data transport. Establishing this requires dedicated pilot and estimator design, which we leave to future work. 

Both ZC-based modulation and AFDM address the chirp-domain fading problem of OCDM by designing their respective parameters to separate overlapping delay-Doppler paths. In AFDM, the parameters $c_1$ and $c_2$ are selected by setting $c_1$ proportional to the inverse of the maximum Doppler index scaled by $N$, such that all paths in the DAFT domain map to distinct delay indices. The design rule operates over continuous values, which offers fine-grained control over path placement.
In ZC-based modulation, the design operates through the integer root parameter $R$, which satisfies \eqref{eq:valueR_timeEven}. The resulting effective channel in \eqref{eq:zc_channel} maps the $i$-th propagation path to a delay of $l_i+P \kappa_i(\bmod N)$ in the chirp domain. The mapping, $\left(l_i, \kappa_i\right) \mapsto l_i+P \kappa_i(\bmod N)$, is a linear projection from the 2D delay-Doppler plane onto a 1D axis, where $P$ controls the slope of this projection. This projection makes it convenient to directly inspect where each path falls on this axis for a given channel profile. The key mechanism is different from that of AFDM, where ZC modulation exploits the algebraic time-frequency duality of ZC sequences (cf. \eqref{ZC_mod_time} and \eqref{ZC_mod_freq}) to transform each Doppler shift into an additional time shift of $P$ units per Doppler index.

The comparison between ZC-based modulations and AFDM can also be understood geometrically. The operations composing the DAFT, chirp multiplications, and the Fourier transform act on the delay-Doppler plane as shears and rotations, respectively. Their composition is an area-preserving transformation of the two-dimensional spreading function, followed by projection onto the one-dimensional DAFT-domain axis. However, an area-preserving transformation of the plane composed with a projection along a fixed axis is equivalent to a projection along a different axis \cite{LCT_book}. That is, any such pre-transformation of the delay-Doppler point cloud can be absorbed into a change of the projection direction.

A key practical consequence of the ZC time-frequency duality is the structure of the effective channel $\mathbf{H}_{\text {ZC }}$ in \eqref{eq:zc_channel} as a convolutional filter with a trellis representation. 
This trellis structure enables the direct application of classical maximum likelihood sequence detection methods, such as the Viterbi algorithm and MAP symbol detection (such as the BCJR and the log-MAP algorithms and their max-log-MAP approximation and the SOVA algorithm), which enable a precise computation of soft information (i.e.,  log-likelihood ratios (LLRs), extrinsic and intrinsic information). In practice, no communication system operates without error correction coding, which requires soft information, and the trellis structure makes turbo processing particularly efficient. The trellis-compatible convolutional form of the ZC effective channel is a distinctive feature that arises directly from the uniqueness property established in Theorems \ref{Th:freqShift} and  \ref{Th:time2Freq}.
In AFDM, the effective channel in the DAFT domain also exhibits a sparse banded structure when $c_1$ and $c_2$ are properly chosen, and equalization can be performed using message passing or linear receivers.  This reduction from two dimensions to one dimension allows a wireless communication designer to reuse mature receiver hardware and software for doubly dispersive channels.

Finally, we remark that ZC-based modulations and AFDM are best viewed as complementary approaches to chirp-domain communication rather than competing approaches. AFDM provides a flexible, continuously parameterized framework with provable full-diversity guarantees, while ZC-based modulation characterizes the unique algebraic structure within the CAZAC family that enables transforming doubly dispersive channels into singly dispersive channels. The ZC framework thus offers theoretical insight into why chirp-domain modulation is effective, and AFDM provides a practical design methodology for optimizing chirp parameters.


\subsection{Diversity Analysis} \label{sec:diversity}

In this section, we analyze the diversity order of a ZC-modulated communication system. In the previous sections, we focused on the input-output relationship of a ZC-modulated system in a noise-less channel. We now extend the input-output relationship in \eqref{eq:noNoise_IO_zc} to an additive white Gaussian noise  (AWGN) channel, $\vect{n} \in \mathbb{C}^N$, 
of zero mean and variance of $N_0$. Then, we can express the input-output relationship in an AWGN channel as

\begin{equation}\label{eq:awgnIO_zc}
\vect{y}_{\text{ZC}} =  \matr{C}_R\matr{H}  \matr{C}_R^H \vect{x} + \matr{C}_R\vect{n} =  \matr{H}_{\text{ZC}} \vect{x} + \vect{w}. 
\end{equation}

\noindent Note that $\vect{w}=  \matr{C}_R\vect{n} \sim \mathcal{N}\left(0, N_0\matr{I}\right)$, where $\matr{I}$ is the identity matrix. We can rewrite \eqref{eq:awgnIO_zc} as
\begin{equation} 
\vect{y}_{\text{ZC}} =  \matr{\Psi} \left( \vect{x}\right) \tilde{\vect{h}} + \vect{w}, 
\end{equation}
where $ \matr{\Psi} \left( \vect{x}\right)$ is a concatenation matrix, , which is  defined as 
\begin{equation}
 \matr{\Psi} \left( \vect{x}\right) = \left[ \hat{\matr{H}}_0 \vect{x}, \hat{\matr{H}}_1 \vect{x}, \ldots, \hat{\matr{H}}_{L-1} \vect{x}               \right]_{N\times L}, 
\end{equation}
where $\hat{{H}}_i [m,n] = e^{j \frac{2\pi}{N} m \kappa_i} \delta[n - m + l_i  +R^{-1} \kappa_i \Mod{N}]$, and $\tilde{\vect{h}}  = [\tilde{h}_0, \tilde{h}_1, \ldots, \tilde{h}_{L-1}]^T_{L\times 1}$.

The diversity order of a communication system is the slope of its probability of error as a function of the SNR. We find the average probability of error using 
\begin{equation}
P_e = \mathbb{E}_{ \tilde{\vect{h}} } [P_e(\tilde{\vect{h}})] =  \mathbb{E}_{ \tilde{\vect{h}} } \left[\frac{1}{|\mathcal{X}|} \sum_{\vect{x}\in\mathcal{X}}P_e(\vect{x} \mid  \tilde{\vect{h}} ) \right],
\end{equation}
where $P_e(\tilde{\vect{h}})$ is the probability of error given a channel instance $\tilde{\vect{h}}$, $\mathcal{X}$ is the M-ary finite alphabet of $\vect{x}$, $|\mathcal{X}|$ indicates the cardinality of $\mathcal{X}$, and $\vect{x}$ is normalized such that it has a unit average energy. By applying the union bound, we obtain the following inequality 
\begin{equation}
P_e(\vect{x} \mid  \tilde{\vect{h}} ) \leq  \sum_{\vect{x} \neq \hat{\vect{x}}} \Pr( \vect{x}  \rightarrow  \hat{\vect{x}}\mid \tilde{\vect{h}} ),
\end{equation}
where $\Pr( \vect{x}  \rightarrow  \hat{\vect{x}} )$ is the pairwise error probability (PEP) of confusing $ \vect{x}$ with $\hat{\vect{x}}$ when
$ \vect{x}$ is transmitted. Given maximum likelihood detection at the receiver, we can find the PEP; that is, we have 
\begin{equation}
\label{eq:pep}
\begin{split}
 \Pr( \vect{x}  \rightarrow  \hat{\vect{x}}\mid \tilde{\vect{h}})  &= \Pr(||\vect{y}_{\text{ZC}} -   \matr{\Psi} \left( \hat{\vect{x}}\right) \tilde{\vect{h}} ||^2 \leq ||\vect{y}_{\text{ZC}} -   \matr{\Psi} \left( \vect{x}\right) \tilde{\vect{h}} ||^2) \\&= Q\left( \frac{||\matr{\Psi} \left( \hat{\vect{x}}\right)  \tilde{\vect{h}} - \matr{\Psi} \left( \vect{x}\right)  \tilde{\vect{h}}  ||}{\sqrt{2N_0}}  \right).\end{split}
\end{equation}

For simplicity, let $\boldsymbol\zeta =  \hat{\vect{x}} - \vect{x}, \boldsymbol\zeta \in \mathbb{Z}^{N\times 1}[j]$, where $ \mathbb{Z}[j]$ is the set of complex integers. Then, we have 
\begin{equation}
 \Pr( \vect{x}  \rightarrow  \hat{\vect{x}}\mid \tilde{\vect{h}} ) = Q\left( \frac{||\matr{\Psi} \left( \boldsymbol\zeta \right)   \tilde{\vect{h}}  ||}{\sqrt{2N_0}}  \right), 
\end{equation}
where we can find 
\begin{equation}
||\matr{\Psi} \left( \boldsymbol\zeta \right) \tilde{\vect{h}} ||^2 = \tilde{\vect{h}}^{H}\matr{\Psi} \left( \boldsymbol\zeta \right) ^{H}\matr{\Psi} \left( \boldsymbol\zeta \right)  \tilde{\vect{h}} =\tilde{\vect{h}}^H \matr{\Upsilon} \left( \boldsymbol\zeta \right)  \tilde{\vect{h}},
\end{equation}
where $\mathbf{\Upsilon}(\boldsymbol{\zeta}) = \Psi(\boldsymbol{\zeta})^H\Psi(\boldsymbol{\zeta})$ is an $L \times L$ Hermitian matrix.
Using a simple bound on the Q-function, we find
\begin{equation} 
\begin{split}
P_e 
\leq  \frac{1}{|\mathcal{X}|} \sum_{\vect{x}\in\mathcal{X}} \sum_{\vect{x} \neq \hat{\vect{x}}}  \mathbb{E}_{\tilde{h}} \left[ e^{-\frac{||\matr{\Psi} \left( \boldsymbol\zeta \right) \tilde{\vect{h}} ||^2}{4N_0}} \right]. \end{split}
\end{equation}
Using Chernoff's bound for the Rayleigh fading channel, where $\tilde{\mathbf{h}} \sim \mathcal{CN}(\mathbf{0}, \frac{1}{L}\mathbf{I})$, we can bound the probability of error as follows
\begin{equation} 
P_e  \leq  \frac{1}{|\mathcal{X}|} \sum_{\vect{x}\in\mathcal{X}} \sum_{\vect{x} \neq \hat{\vect{x}}} \left( \prod_{i=1}^{r(\vect{x}, \hat{\vect{x}})} \frac{1}{1+ \frac{\lambda_i^2}{4LN_0}}\right), 
\end{equation}
where $r(\vect{x}, \hat{\vect{x}})$ is the rank of $\matr{\Upsilon} \left( \boldsymbol\zeta \right)$, and $\lambda_i$ is the $i$-th non-zero eignvalue of $\matr{\Upsilon} \left( \boldsymbol\zeta \right)$. In a high-SNR regime, the bound on $P_e$ becomes

\begin{equation} 
P_e  \leq  \frac{1}{|\mathcal{X}|} \sum_{\vect{x}\in\mathcal{X}} \sum_{\vect{x} \neq \hat{\vect{x}}}  \left(\prod_{i=1}^{r(\vect{x}, \hat{\vect{x}})}\lambda_i^2\right)^{-1} \left(\frac{1}{4LN_0}\right)^{-r(\vect{x}, \hat{\vect{x}})}. 
\end{equation}

Finally, the system diversity, taken as the slope of $P_e$, is defined as
\begin{equation}
r \overset{\text{def}}= \min_{\vect{x} \neq \hat{\vect{x}}, (\vect{x}, \hat{\vect{x}}) \in \mathcal{X}\times\mathcal{X}} r(\vect{x}, \hat{\vect{x}}). 
\end{equation}

\begin{figure}
    \centering
    \includegraphics[width=\linewidth]{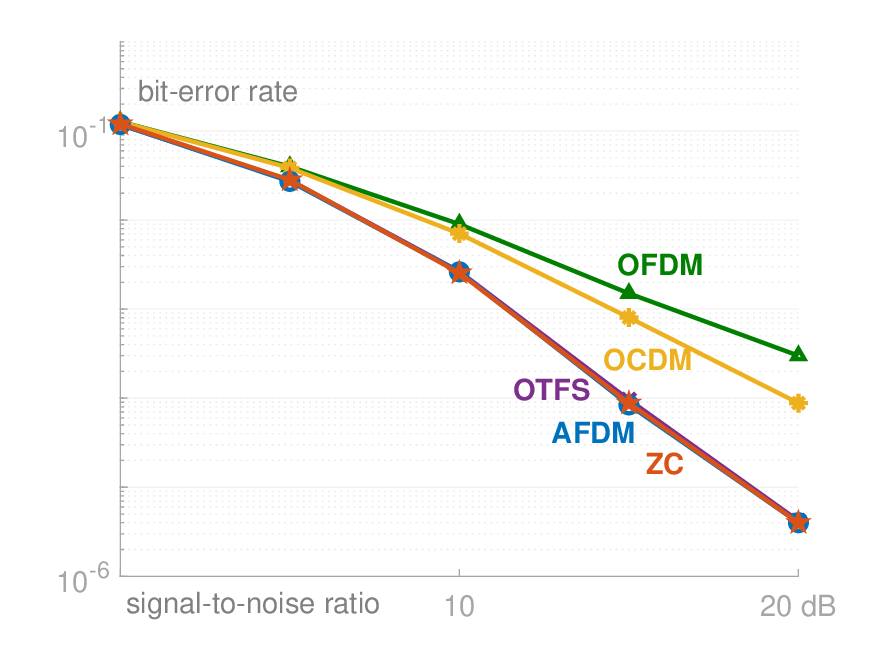}
    \caption{Optimized chirp-domain communication, ZC and AFDM, and OTFS obtain full diversity in an LTV channel with three propagation paths.}\label{fig: diversity}
\end{figure}
To find the rank of  $\matr{\Upsilon} \left( \boldsymbol\zeta \right)$, it is sufficient to find the rank of $\matr{\Psi} \left( \boldsymbol\zeta \right)$ for all non-zero vectors of $\boldsymbol\zeta$. By observing $\matr{\Upsilon} \left( \boldsymbol\zeta \right)$, one can see that it results in a phase shift of the vector $ \boldsymbol\zeta$ and a circulant shift. If  $\boldsymbol\zeta = c \boldsymbol 1_{N\times 1}$, where $1_{N\times 1}$ is all one vector and $c \in \mathbb{Z}[j]$ is a constant, then rank$[\matr{\Upsilon} \left( \boldsymbol\zeta \right)]< L$. Hence, the ZC-modulated systems, including the special case of OCDM, do not ensure full diversity all of the time. Although ZC modulation does not provide a full diversity gain in the asymptotic sense, our numerical results show that ZC can still maintain some degree of diversity in the finite SNR regime.
The diversity limitation identified above is not unique to ZC-based modulation. Surabhi \emph{et al}. \cite{Surabhi2019_diveristy} show that uncoded OTFS modulation exhibits the same structural limitation. AFDM, by contrast, avoids this limitation through its two-parameter continuous design, which provides sufficient degrees of freedom to maintain full rank for all error patterns \cite{2023_Benami}. 
Future work could also investigate classical techniques for extracting the full diversity order, such as phase rotation \cite{damen_diveristy}, or improving system performance by leveraging coding gain.

Another important practical consideration in the design of ZC-based modulations is the relationship between the diversity order and the receiver complexity, both of which depend on the choice of $ P = R^ {-1} \Mod{N}$. From  \eqref{eq:zc_channel}, the effective channel memory is determined by the maximum chirp-domain delay, $\max _i\left(l_i+P \kappa_i \Mod{N}\right)$. As $P$ increases to better separate overlapping paths and thereby increase the achievable diversity, the effective channel memory grows proportionally. For a trellis-based receiver such as the Viterbi or BCJR algorithms, the number of states scales as $|\mathcal{X}|^{L_{\text {eff }-1}}$, where $|\mathcal{X}|$ is the constellation size and $L_{\text {eff }}$ is the effective channel memory length. Consequently, maximizing the diversity order may lead to a prohibitively large trellis, particularly for higher-order modulation alphabets.
The geometric nature of the ZC mapping makes this tradeoff dynamically adjustable. 
Since the delay-Doppler profile of a time-varying channel evolves over a communication session, there is no single fixed $P$. 
The channel profile is projected onto a one-dimensional axis with slope $P$, and one can directly inspect, for each admissible value of $R$, both the number of separated paths (which determines the achievable diversity) and the resulting channel memory (which determines the receiver complexity), allowing $P$ to be updated accordingly as channel conditions change. In this way, ZC modulation exposes a family of operating points on the diversity-complexity curve from which to select the configuration that best matches the target application's requirements.
\newline
This perspective offers a practical advantage. In many scenarios, the full diversity order $L$ is not required to meet a given reliability target. For instance, at a target BER of $10^{-4}$, a diversity order of three may be sufficient to achieve the desired performance, and increasing the diversity beyond this value yields diminishing returns in BER while incurring a significant increase in receiver complexity. 
ZC-based modulation thus provides a principled mechanism for matching the system's diversity order to the operational requirements, rather than unconditionally maximizing diversity at the expense of complexity. We note that the tradeoff between diversity and complexity is less visible in AFDM, where the parameter design is oriented toward guaranteeing full diversity for all channel realizations. While guaranteeing full diversity is a desirable theoretical property, it does not directly expose the diversity-complexity relationship to the system designer.

\begin{figure}
    \centering
    \includegraphics[width=\linewidth]{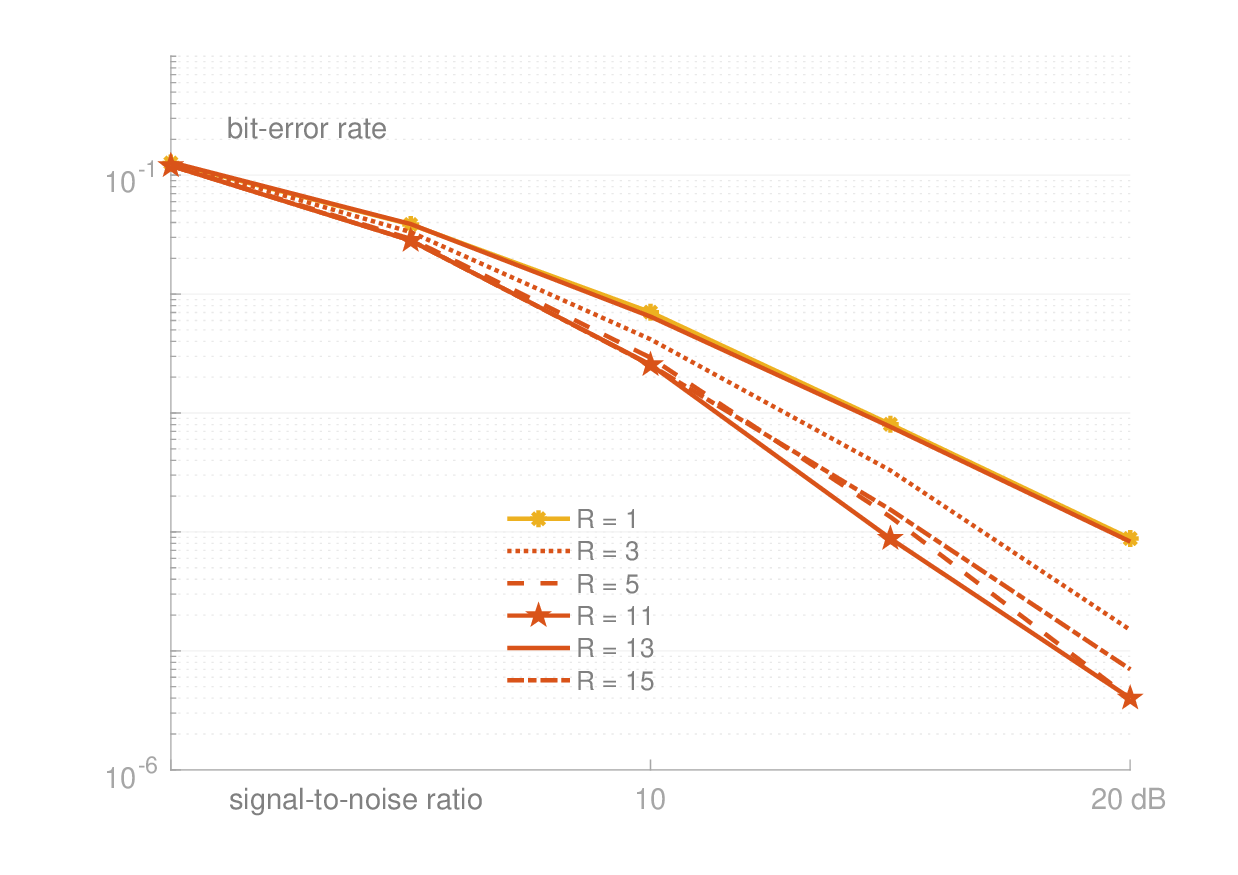}
        \caption{Changing the chirp parameter $R$ yields different error performance for the same channel.}\label{fig:Rvals}
\end{figure}

\section{Results and Discussion} \label{sec:results}

\noindent In this section, we present numerical results to demonstrate the performance of ZC-based modulation in doubly selective channels. We evaluate the bit error performance in a Rayleigh fading channel with Doppler shifts. 

We start by evaluating and comparing the performance of ZC modulation with state-of-the-art modulation schemes, namely OTFS \cite{2019_biglieri}, OCDM, and affine frequency division multiplexing (AFDM) \cite{2021_Benami, 2023_Benami}, the latter of which has been proven to achieve full diversity in doubly selective channels. We consider the parameters suggested in \cite{2023_Benami}, where $L=3$, and the corresponding $R$ value is set to $R=11$, using a binary phase shift keying (BPSK) constellation alphabet and maximum likelihood sequence detection (MLSD). We choose MLSD to better assess the diversity gain of our proposed modulation in LTV channels. For a fair comparison, we use $N=16$ for ZC, AFDM, and OCDM, while the OTFS frame is generated with $N_{OTFS} = 4$ and $M_{OTFS} = 4$, as shown in  \cite{2023_Benami}, with an ideal pulse-shaping window and a cyclic prefix of length equal to the maximum delay spread. All four schemes operate with the same bandwidth, frame duration, constellation alphabet, and cyclic prefix length.
We plot the bit-error-rate (BER) for the four systems in Fig. \ref{fig: diversity}, which shows that the proposed ZC-based modulation system achieves full diversity in the finite SNR regime---similarly to OTFS and AFDM. Therefore, Fig. \ref{fig: diversity} confirms that designing the chirp rate according to the channel spread helps mitigate the chirp selectivity. 
It is worth mentioning that when the propagation paths in the chirp domain are separated, OCDM performs similarly to ZC, OTFS, and AFDM.

In Section \ref{sec:designR}, we show a systematic way of designing the chirp parameter, $R$, to achieve the desired relationship given by \eqref{eq:valueR_timeEven}. 
In  Fig. \ref{fig:Rvals}, we investigate if other values of $R$ can provide the same error performance as   \eqref{eq:valueR_timeEven}. As expected, the trivial case of $R=1$ provides the same performance as OCDM in Fig. \ref{fig: diversity}. Moreover, Fig. \ref{fig:Rvals} confirms that the value of $R$ is not unique, and there exist other values of $R$ that provide similar performance as \eqref{eq:valueR_timeEven}. However, it is essential that we consider the channel profile when selecting $R$. Thus, the design criteria in Section \ref{sec:designR} provides a systematic framework for tuning $R$ based on the channel maximum delay spread, such that we minimize the overlap of the propagation paths in the chirp domain.

    \begin{figure}[t]
    \centering
    \includegraphics[width=\linewidth]{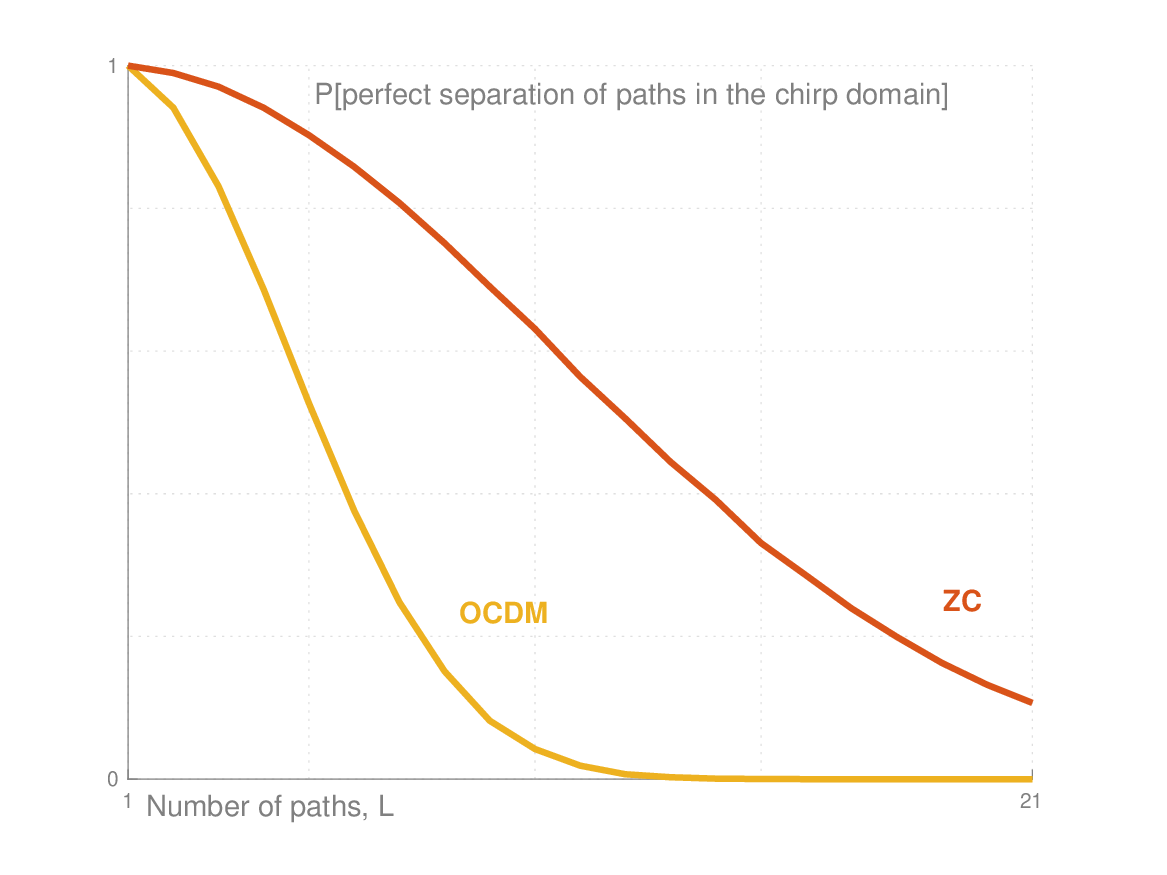}
    \caption{OCDM is more likely to have overlapping paths in the chirp domain, with probability one when $L \geq 11$. }\label{fig:overlapProb}
\end{figure}
\begin{figure}[t]
    \centering
    \includegraphics[width=\linewidth]{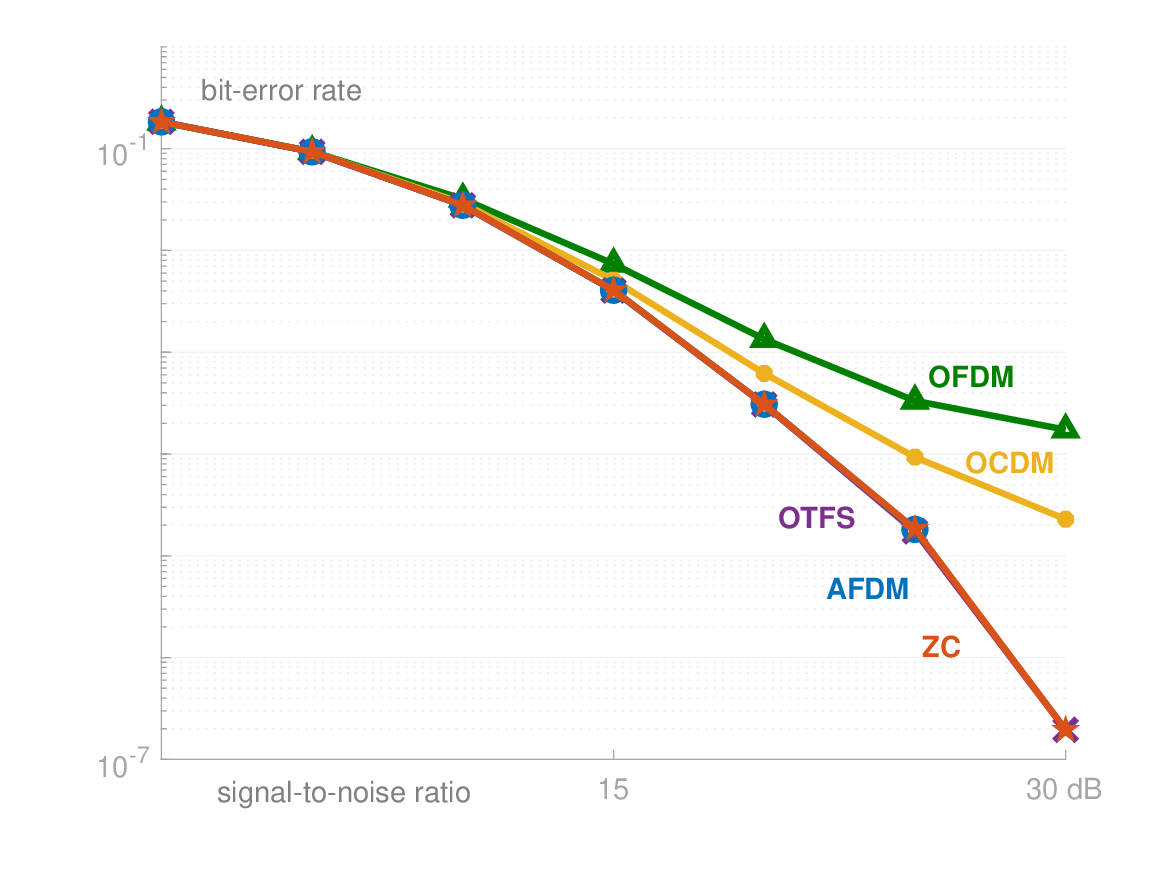}
    \caption{ZC modulation offers 5 dB improvement over OCDM and approximately 8 dB improvement over OFDM using a LMMSE detector in an LTV channel of three paths, assuming integer Doppler shifts. }
    \label{fig:intLMMSE}
\end{figure}

\begin{figure}[t]
    \centering
    \includegraphics[width=\linewidth]{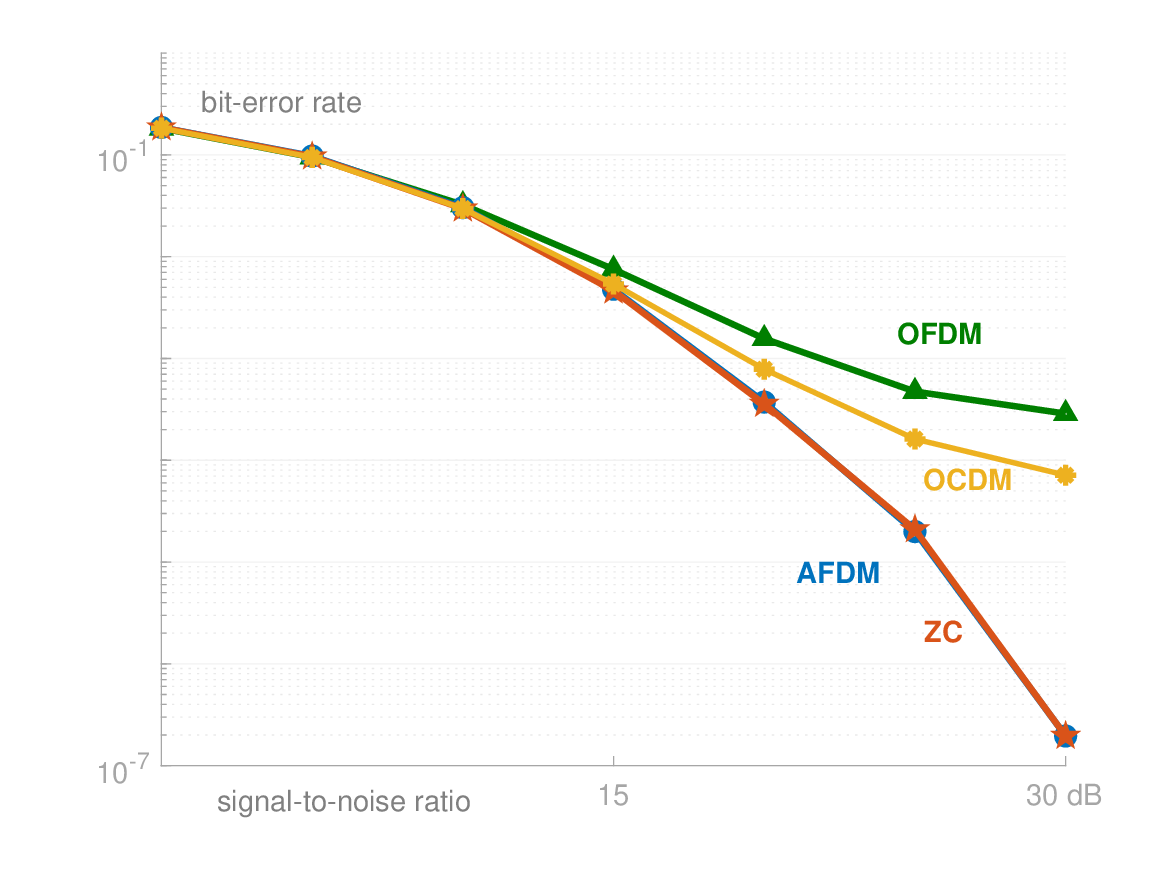}
    \caption{ZC modulation offers reliable communication using a LMMSE detector in doubly selective channels, even with fractional Doppler shifts. }
    \label{fig:fLMMSE}
\end{figure}
    
While OCDM and ZC-based modulation are expected to perform similarly under the same channel conditions, the advantage of using ZC over OCDM is the fact that modulation of the signals using ZC with optimized $R$ is less likely to result in overlapping propagation paths in the chirp domain than OCDM. In other words, by optimizing the parameter of ZC, $R$, one can reduce the event that $l_i  +P \kappa_i\Mod{N} = l_j  +P \kappa_j\Mod{N}$ compared to $l_i +\kappa_i \Mod{N} = l_j +\kappa_j \Mod{N}$, for $i\neq j, i,j = 0,\ldots, L-1$. 
In Fig. \ref{fig:overlapProb}, we show a numerical distribution of the probability of having no overlap of propagation paths in the chirp domain, i.e., all chirp paths are unique, for OCDM and ZC-based modulation systems. The propagation paths are synthetically generated, where the delay is uniformly distributed between zero and nine, and the Doppler paths of each path among $L$ are between $[-5,5]$, for $N = 256$ and $R=57$. 
 Fig. \ref{fig:overlapProb} shows that when a channel has 11 paths or more, OCDM will experience chirp selectivity with probability one, whereas ZC modulation is $50\%$ more likely to experience selectivity (overlapping). 
While neither OCDM nor ZC-based modulation can guarantee the separability of the propagation paths, we can design $R^{-1}$ to avoid all overlaps in the chirp domain; however, this may be at the cost of larger channel memory.

The ZC-based modulation performs similarly to OCDM for LTV channels that do not exhibit selectivity in the chirp domain. One can circumvent the selectivity in the chirp domain by using ZC-based modulation at the cost of increased channel memory. Therefore, to fully take advantage of the proposed ZC-based modulation, it is best to pair it with a detection method that does not rely on channel memory, such as LMMSE.

In Fig. \ref{fig:intLMMSE}, we show the performance of OCDM and ZC-based modulation using an LMMSE receiver. 
We also compare the performance of ZC with state-of-the-art modulation schemes, such as OTFS and AFDM, for $N=256$. We consider a channel consisting of three propagation paths, where the maximum Doppler shift, generated using Jakes' Doppler spectrum, is $2$, corresponding to a $540$ km/h  speed, maximum delay spread of $2$, and $R=85$. 
Specifically, we test the performance of five modulation schemes using LMMSE, three of which are in the chirp domain. The figure shows that AFDM, OTFS, and ZC modulation perform similarly in terms of BER. 
This similarity indicates that for a three-path channel with well-separated delay-Doppler indices, all three schemes achieve comparable path separation. The advantage of ZC modulation in this scenario lies in the simplicity of the design procedure. 
Moreover, the convolutional structure of the ZC effective channel in \eqref{eq:zc_channel} enables the use of trellis-based receivers, which may offer advantages in latency-sensitive applications where block-based detection is less desirable.

On the other hand, OCDM leverages the channel selectivity better than OFDM, yet still cannot benefit from the full diversity of the channel, as some paths may interfere in its chirp domain. We can also study the performance of the proposed chirp modulation in fractional Doppler channels, which is a practical assumption. 
Moreover, we validate the performance of OFDM, OCDM, AFDM, and ZC modulation under a more practical assumption of having fractional Doppler shifts in  Fig. \ref{fig:fLMMSE}. We utilize the parameters suggested in \cite{2023_Benami} for a three-path propagation channel and employ an LMMSE detector. Consistent with previous results, chirp-domain communication offers a reliability gain over OFDM. 
Moreover, Fig. \ref{fig:intLMMSE} and Fig. \ref{fig:fLMMSE} show that, for a channel where the delay-Doppler profile may cause selectivity in the chirp domain, ZC modulation offers an improvement of around $3-8$ dB BER at around $10^{-5}$. 
Therefore, we conclude that ZC-based modulation can exploit some delay-Doppler diversity, which OCDM cannot utilize without coding. Moreover, while ZC-based detection would require a much larger channel memory, using LMMSE is sufficient to equalize the channel at a much lower receiver complexity.

 \begin{figure}[t]
    \centering
    \includegraphics[width=\linewidth]{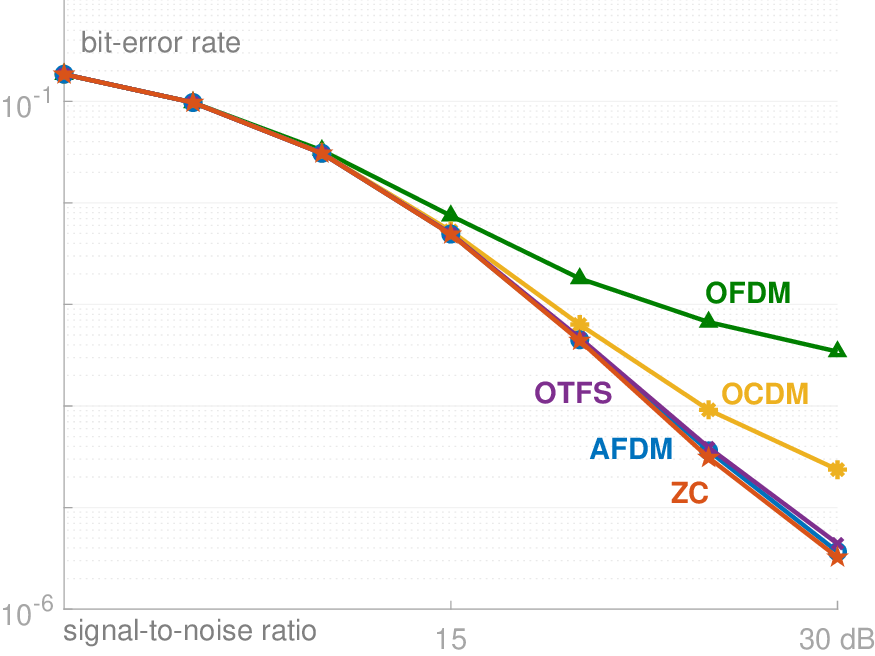}
    \caption{The proposed ZC modulation offers a reliable communication service in the EVA channel model, similarly to AFDM and OTFS. }
    \label{fig:LMMSE_EVA}
\end{figure}

Finally, in Fig. \ref{fig:LMMSE_EVA}, we show the performance of all five modulation schemes in the Extended Vehicular A  (EVA) channel model with a user maximum speed of $500$ km/hr and $N=256$, with subcarrier spacing of $15$~kHz at $4$~GHz carrier frequency. The EVA model consists of nine propagation paths with a maximum delay spread of $2510~ns$, which enables a more realistic evaluation than the synthetic three-path channels. For ZC and AFDM modulations, we set the chirp parameters $R$, $c_1$, and $c_2$ according to the maximum delay and Doppler spread of the EVA profile. 
The performance of the five modulation schemes is as expected and consistent with the previous discussion. 
Specifically, OFDM degrades significantly due to inter-carrier interference (ICI), OCDM outperforms OFDM but suffers from chirp-domain fading caused by overlapping paths, and ZC, AFDM, and OTFS achieve the best performance by resolving the delay-Doppler paths in their respective transform domains. 
Notably, it is evident from Fig. \ref{fig:LMMSE_EVA} that tuning chirp-domain communication can yield up to $10$ dB gain in the high SNR regime. Moreover, our work demonstrates an alternative way of designing chirps using poly-phase sequences, which are commonly used for wireless communication systems.

The numerical results in Figs. \ref{fig: diversity} and \ref{fig:intLMMSE}--\ref{fig:LMMSE_EVA} corroborate the analysis in Section \ref{sec:diversity} and 
 show that ZC-based modulation achieves comparable BER performance to AFDM across a range of practical SNR values and channel conditions, including the EVA channel model. This observation is consistent with the behavior of uncoded OTFS, which similarly does not guarantee full diversity yet performs comparably to full-diversity schemes in the finite SNR regime \cite{Surabhi2020_equalizer}. 
We note that classical techniques, such as phase rotation \cite{damen_diveristy} or outer coding, could be applied to close the asymptotic diversity gap, if required.

\subsection{Complexity Comparison}

To complement the BER comparisons, we present a unified complexity analysis of the modulation and demodulation operations for the five schemes. The modulation/demodulation complexity is $\mathcal{O}(N \log N)$ for OFDM, OCDM, AFDM, and ZC, since each decomposes into element-wise chirp multiplications and FFT operations \cite{rou2024_mag}. OTFS modulation, on the other hand, involves an ISFFT of  $\mathcal{O}(N_{OTFS} \log N_{OTFS})$ followed by a Heisenberg transform of $\mathcal{O}(N_{OTFS} \log M_{OTFS})$, giving an overall complexity of $\mathcal{O}(N_{OTFS}M_{OTFS} \log M_{OTFS})$ \cite{deng2025_survey}.  In terms of detection complexity, LMMSE equalization requires matrix inversion with complexity $\mathcal{O}\left(N ^3\right)$ for OFDM, OCDM, AFDM, and ZC-based modulation and  $\mathcal{O}\left(N_{OTFS} ^3M_{OTFS} ^3\right)$ for OTFS. By taking advantage of the banded and sparse structure of the channel, we can reduce the LMMSE complexity to  $\mathcal{O}\left(N L^2\right)$, where  $L$ is the number of significant ICI terms in OFDM, $L = L_{\text {eff }}^{\text {OCDM }}=\max _i\left(l_i+\kappa_i\right)$ in OCDM, $L = L_{\text {eff }}=\max _i\left(l_i+P \kappa_i(\bmod N)\right)$ in ZC, and finally, $L = L_{\text {eff }}^{\text {AFDM}} = (l_{\max} +1)(2(\nu_{\max} +k_\nu)+1)-1$ in AFDM \cite{bemani2022_lmmse}, and to $\mathcal{O}\left(M_{OTFS} N_{OTFS} \log{M_{OTFS} N_{OTFS} }\right)$ in OTFS \cite{Surabhi2020_equalizer},  
  \cite[and the references within]{deng2025_survey}. With the optimal MLSE receiver, the complexity  is $\mathcal{O}\left(|\mathcal{X}|^{M_{OTFS} N_{OTFS} }\right)$  in OTFS and  $\mathcal{O}\left(|\mathcal{X}|^{N}\right)$ in AFDM. However, for ZC (and OCDM as a special case), we can take advantage of its trellis-compatible structure to use the Viterbi or BCJR algorithms with a trellis complexity of $\mathcal{O}\left(N|\mathcal{X}|^{L_{\text {eff }}}\right)$.

\section{Conclusion} 
\noindent
This article presented a family of chirp-based modulation schemes for time-varying channels (delay-Doppler communication systems). We revealed how chirp signals and ZC sequences efficiently operate on high-mobility wireless channels and highlighted the unique ability of ZC sequences to transform the delay-Doppler channel into a time- or frequency-dispersive channel. We analyzed the interaction of ZC signals with high-mobility channels, extended the analysis to other variants, and compared the proposed approach with the state-of-the-art modulation, such as  OCDM, OTFS, and AFDM. The key advantage of ZC-based modulation, as opposed to its variant OCDM, is that it increases the diversity in the chirp domain and improves multipath diversity gain. 
By tuning the slope parameter of the ZC-based modulation, we provided a geometric projection from the two-dimensional delay-Doppler plane onto a one-dimensional chirp-domain axis to improve the performance of the communication system in the chirp domain. This projection is a function of the chirp root and reveals a trade-off between diversity order and receiver complexity that can be matched to the target application. We showed that ZC-based modulation matches AFDM and OTFS in bit-error rate while reducing chirp-parameter design to a single modular inverse and exposing an explicit diversity-complexity operating trade-off. 
ZC modulation and AFDM frameworks are best understood as complementary. 
The transmitter can also change the root between retransmissions of a packet, at no cost in bandwidth. Each root projects the delay-Doppler plane onto a different axis, so paths that overlap under one root generally separate under another, and a receiver combining both transmissions recovers part of the diversity a single root cannot guarantee. This mechanism needs no channel knowledge at the transmitter, which suits sporadic traffic where $P$ cannot be optimized in advance, and merits a dedicated investigation in future work. 
Finally, the ideal periodic autocorrelation of ZC sequences suggests favorable behavior under channel-estimation error, synchronization offsets, and imperfect knowledge of the delay-Doppler support, and points toward integrated sensing and communication. Quantifying these gains requires pilot and estimator design,  which is a natural continuation of the present work.

\appendices
\section{Deriving OCDM channel matrix  in the time domain}\label{apen:chirp_channel}
\noindent Here, we derive the expression of the channel in the chirp domain, as given in \eqref{eq:chirp_channel}. 
 
By looking at the element of $\matr{H}_c$ in the $m$-th row and $n$-th column, we find 
\begin{equation}
{H}_c \left[m,n\right] = \sum_{i_1=0}^{N-1}\sum_{i_2=0}^{N-1} {\Phi}[m, i_1]  H[i_1, i_2] {\Phi}^{\ast}[i_2,n].
\end{equation}

From the definition of the time-domain channel, $\matr{H}$, in \eqref{eq:channelMatrix}, the $m$-th row and $n$-th column is a summation of the contribution of $L$ propagation paths. For simplicity, we write the expression of one such path without the complex fading. Let $\matr{H}_i =  \matr{\Pi}^{l_i} \matr{\Delta}^{\kappa_i}$, then, the the $m$-th row and $n$-th column of $\matr{H}_i $ is
\begin{align}
H_i[m, n] &=  \sum_{q=0}^{N-1}    {\Pi}^{l_i}[m,q] {\Delta}^{\kappa_i}[q,n]\\
	    &=  \sum_{q=0}^{N-1}  \delta[m-q-l_i] e^{j\frac{2\pi \kappa_i}{N}q} \delta[q-n]\\
	    &=  \delta[m-n-l_i] e^{j\frac{2\pi \kappa_i}{N}n}. \label{eq:onePathofH}
\end{align}

Using \eqref{eq:fdnt} and \eqref{eq:onePathofH}, we can expand ${H}_c \left[m,n\right] $ as 
\begin{align}
&{H}_c \left[m,n\right]  \\
&\begin{aligned}=  \sum_{i_1=0}^{N-1}\sum_{i_2=0}^{N-1}  \sum_{i=0}^{L-1}& \frac{1}{N} e^{-j \frac \pi 4} e^{j \frac{\pi (m-i_1)^2}{N}}
						h_i \\&\delta[i_1-i_2-l_i] e^{j2\pi \frac{\kappa_i i_2}{N}}  e^{j \frac \pi 4} e^{-j \frac{\pi (i_2-n)^2}{N}} \end{aligned} \\
						&\begin{aligned} =\frac 1N \sum_{i_1=0}^{N-1}\sum_{i_2=0}^{N-1}  \sum_{i=0}^{L-1}  & e^{j \frac{\pi (m-i_1)^2}{N}}
						h_i \delta[i_1-i_2-l_i] \\& e^{j2\pi \frac{\kappa_i i_2}{N}}  e^{-j \frac{\pi (i_2-n)^2}{N}} \end{aligned} \\
						&\begin{aligned} = \frac 1N \sum_{i_1=0}^{N-1}\sum_{i_2=0}^{N-1}  \sum_{i=0}^{L-1} &h_i \delta[i_1-i_2-l_i] 
						  \\& e^{j \frac{\pi}{N} \left( m^2 -n^2 + i_1^2 -i_2^2 + 2i_2(n + \kappa_i) - 2i_1m\right)} \end{aligned}\\
						&\stackrel{(a)}{=}  \frac VN  \sum_{i_1=0}^{N-1} \sum_{i=0}^{L-1} h_i 
						  e^{j \frac{\pi}{N} \left( 2 i_1(n-m + l_i +\kappa_i) -l_i^2 - 2nl_i - 2l_i\kappa_i  \right)} \\
						&= \frac VN  \sum_{i=0}^{L-1} h_i e^{-j \frac{\pi}{N}(l_i^2 +2nl_i + 2l_i\kappa_i )}
						\sum_{i_1=0}^{N-1}   e^{j 2\frac{\pi}{N} i_1\left(  n-m + l_i +\kappa_i\right)} \\
						&\stackrel{(b)}{=}  V \sum_{i=0}^{L-1} h_i e^{-j \frac{\pi}{N}(l_i^2 +2nl_i + 2l_i\kappa_i )} 
						\delta[n-m+l_i +\kappa_i \Mod{N}] \label{eq:ocdm_Ris1}\\ 
						&\stackrel{(c)}{=}   \sum_{i=0}^{L-1} h_i e^{j \frac{\pi}{N}( 2m\kappa_i - 2l_i\kappa_i - \kappa_i^2)} 
						\delta[n-m+l_i +\kappa_i \Mod{N}].
\end{align}
For simplicty, we let $V = e^{j \frac{\pi}{N} (m^2-n^2)}$. In step (a), we take $i_2 = i_1-l_i$; in (b), we claim that the second summation is either zero or $N$. That is, if $n-m+l_i +\kappa_i = 0\Mod{N}$, then $\sum_{i_1=0}^{N-1}   e^{j 2\frac{\pi}{N} i_1\left(  n-m + l_i +\kappa_i\right)} =N$, otherwise the sum is zero. 

Moreover, we simplify the term in (c), by taking $n= m-l_i -\kappa_i  \Mod{N}$.

\section{Proof of Theorem \ref{Th:time2Freq}}\label{apen:timetoFreq}
\noindent We wish to find a finite-length cyclically CAZAC sequence of length $N$,  $\left\{a[k]\right\}_{k=0}^{N-1}$, such that a unit cyclical time shift is equivalent, up to a multiplicative factor, $\beta$, to the same sequence modulated in frequency by $\frac PN$. That is, 
\begin{equation}
a[k-1\bmod N] =\beta a[k] e^{\frac{j 2 \pi kP}{N}}, \quad 0 \leq k<N. \label{eq:time2freq}
\end{equation}

The projection of any arbitrary circular time shifts onto frequency shifts can be generalized as follows: 
\begin{lemma}\label{lem:mtimeshift}
Given \eqref{eq:time2freq}, where a single unit of time shift leads to a frequency shift of $\frac PN$ units, then, a time shift of $m, m \in \mathbb{Z}_{\neq 0}$ leads to a $m \frac PN$ frequency shift and a constant multiplicative phase shift of $e^{-j\pi\frac{m(m-1)\tau}{N}}$. In other words, we get
\begin{equation}
a[k-m\bmod N] =\beta^m a[k] e^{\frac{j 2 \pi mk P}{ N}} e^{-j\pi\frac{m(m-1)P}{N}}. \label{eq:Timeseq_m}
\end{equation}

\end{lemma}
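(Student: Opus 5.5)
The plan is to mirror the proof of Lemma~\ref{lem:mfreqshift}: induction on $m$, now applying the defining relation \eqref{eq:time2freq} to a shifted index rather than multiplying by a frequency shift. (The phase factor in the statement should read $e^{-j\pi\frac{m(m-1)P}{N}}$, with $P$ in place of $\tau$, as in \eqref{eq:Timeseq_m}; this is what the induction produces.)

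\emph{Base case.} For $m=1$ the claim is exactly \eqref{eq:time2freq}, since $m(m-1)=0$ kills the extra phase.

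\emph{Preliminary remark.} All indices are taken modulo $N$, and the modulation $e^{j2\pi kP/N}$ is $N$-periodic in $k$. So \eqref{eq:time2freq} may be evaluated at any shifted index $k\to k-s \bmod N$ without ambiguity. This is the only point needing care, and it is immediate.

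\emph{Induction step.} Assume \eqref{eq:Timeseq_m} holds for $m=n\ge 1$. Write $a[k-(n+1)]=a[(k-1)-n]$ and apply the hypothesis at index $k-1$:
\begin{equation*}
a[k-(n+1)] = \beta^{n} a[k-1]\, e^{\frac{j2\pi n(k-1)P}{N}} e^{-j\pi\frac{n(n-1)P}{N}}.
\end{equation*}
Next substitute the base relation $a[k-1]=\beta a[k]e^{j2\pi kP/N}$. Collecting the $k$-dependent exponentials gives $e^{j2\pi (n+1)kP/N}$. The constant phases are
\begin{equation*}
-\frac{\pi P}{N}\bigl(2n+n(n-1)\bigr) = -\frac{\pi P\, n(n+1)}{N},
\end{equation*}
which is exactly $-\pi m(m-1)P/N$ at $m=n+1$. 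The factor $\beta^{n+1}$ is obtained as required. This bookkeeping of the quadratic phase is the only computational step, and I expect it to be the main (mild) obstacle.

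\emph{Negative $m$.} The identity holds trivially at $m=0$. For $m<0$, I would run the induction downward. Invert \eqref{eq:time2freq} at index $k+1$ to get
\begin{equation*}
a[k+1]=\beta^{-1}a[k]\,e^{-j2\pi (k+1)P/N}.
\end{equation*}
Then step from $m$ to $m-1$ in the same way as above. The phase $-\pi m(m-1)P/N$ is a polynomial identity in $m$, so the same algebra closes the step. This yields the analogue of \eqref{eq:negm} with $\beta^{-1}$ playing the role of $\tilde\alpha$.
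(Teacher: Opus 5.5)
Your proof is correct and follows the paper's argument essentially verbatim: induction on $m$, applying the hypothesis at index $k-1$ and then substituting the base relation \eqref{eq:time2freq} for $a[k-1]$, with the same phase bookkeeping $2n+n(n-1)=n(n+1)$. Your other additions are also right. The $\tau$ in the statement should indeed read $P$. The periodicity remark (using that $P$ is an integer) justifies evaluating \eqref{eq:time2freq} at wrapped indices. The explicit downward induction for $m<0$ supplies what the paper only asserts in \eqref{eq:Time_negm}.
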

\begin{proof}
 We use proof by induction. The base case for $m = 1$ is easily verifiable. 
Now, we form the hypothesis that for $m= n$, we get 
 \begin{equation}
a[k-n\bmod N] =\beta^n a[k] e^{\frac{j 2 \pi nk P}{ N}} e^{-j\pi\frac{n(n-1)P}{N}}. 
\end{equation}

\noindent In the induction step, we work with $m= n+1$. Our goal is to prove that 
 \begin{equation}
a[k-(n+1)\bmod N] = \beta^{(n+1)} a[k]  e^{\frac{j 2 \pi k(n+1)P }{ N}} e^{-j\pi\frac{n(n+1)P}{N}}. 
\end{equation}

\noindent We start with our hypothesis and let $k = k-1$, that is
\begin{align}
a[k-1-n\bmod N] =\beta^n a[k-1] e^{\frac{j 2 \pi n(k-1) P}{ N}} e^{-j\pi\frac{n(n-1)P}{N}}.
\end{align}
By substituting $a[k-1\bmod N]$ on the right-hand side by the base case \eqref{eq:time2freq}, we get 
\begin{align}
\begin{split}
&a[k-(n+1)\bmod N]\\&=\beta^n \left(\beta a[k] e^{\frac{j 2 \pi kP}{N}} \right) e^{\frac{j 2 \pi n(k-1) P}{ N}} e^{-j\pi\frac{n(n-1)P}{N}}, 
\end{split}
\end{align}
which we can easily simplify to obtain 

\begin{align}
a[k-(n+1)\bmod N] = \beta^{(n+1)} a[k]  e^{\frac{j 2 \pi k(n+1)P }{ N}} e^{-j\pi\frac{n(n+1)P}{N}}. 
\end{align}

\end{proof}

\noindent It is easy to see that for forward time-shifts, \eqref{eq:Timeseq_m} becomes 
\begin{equation}
a[k+m\bmod N] = \beta^{-m} a[k] e^{-\frac{j 2 \pi mkP}{N}} e^{-j\pi\frac{m(m+1)P}{N}}. \label{eq:Time_negm}
\end{equation}

We can draw the conclusion that if an integer circular time shift of $1$-unit leads to $\frac PN$-unit frequency shifts, the $m$-unit time shift will lead to $m$ multiples of $\frac PN$  frequency shifts and a constant multiplicative factor.  
This sequence is very interesting because once we know the value of $a[0]$, we can deduce the value of $a[k], 0<k<N$, scaled by a frequency shift and a multiplicative factor. From  \eqref{eq:Time_negm}, we have 
\begin{equation}
a[m\bmod N] = \beta^{-m}  e^{-j\pi\frac{m^2P}{N}} e^{-j\pi\frac{mP}{N}} a[0], \label{eq:s_mSeq}
\end{equation}
and due to periodicity of the sequence $a[k]$, we can also express this as 
\begin{equation}
\begin{split}
 a[0] &= a[N] = \left(  \beta^{-1} e^{-j\frac{\pi P}{N}}  \right)^{N} e^{-j\pi P N} a[0] \\&= \left(  \beta^{-1} e^{-j\frac{\pi P}{N}} \right)^{N} (-1)^{P N} a[0],\end{split}
\end{equation}
which perpetuates that $ \left(  \beta^{-1} e^{-j\frac{\pi P}{N}} \right)(-1)^{P} = e^{j\frac{2\pi t}{N}}$, for some $t$, where $0\leq t<N$. 
 Hence, we can rewrite \eqref{eq:s_mSeq} as
 \begin{equation}
a[k]=(-1)^{Pk} e^{j  \frac{2 \pi  t k}{N}} e^{-j\frac{\pi k^{2}P}{N}} a[0], \quad 0 \leq k<N. \label{eq:sk_Freq}
\end{equation}

Now that we have the expression for $a[k]$, we can find the autocorrelation---as follows
\begin{equation}
\begin{split}
\Gamma_{v} & =\sum_{k=0}^{N-1} a[k]^{\ast} a[k+v]
 \\&=(-1)^{Pv} e^{j \frac{ 2 \pi  tv}{N}} e^{-j\frac{\pi P v^2}{N}}  \lvert a[0]\rvert^{2} \left(\sum_{k=0}^{N-1} e^{-j \frac{2\pi P kv}{N}}\right). \end{split}\label{eq:autocorre_case2}
\end{equation}

\noindent It is easy to see from \eqref{eq:autocorre_case2} that 
\begin{equation}
\Gamma_{0}=N\lvert a[0]\rvert^{2}=N,
\end{equation}
which implies that $a[0]$ is a pure phase. That is, the modulus of $a[0]$  is unity. This value can be chosen arbitrarily without any loss of generality. On the other hand, to make the autocorrelation zero for $0<v<N$, $kv$ must be non-divisible by $N$. Thus, it is necessary and sufficient that $P$ be coprime with $N$. 
 We can simplify \eqref{eq:sk_Freq} to
\begin{equation}
a[k] 
= e^{-j \pi Pk \frac{k-N- 2q}{N}} a[0], \label{eq:sk_step6_case2}
\end{equation}
where we take $qb = t \Mod{N}$, for some integer $q$, which facilitates the factorization of the exponents. The term $-2 q-N$ characterizes the parity of $N$; we choose to write the parity term as $c+2 m$ for some integer $m$, and  $c=N\Mod{2}$. We can rewrite \eqref{eq:sk_step6_case2} as

\begin{equation}
a[k] = e^{-j \pi P k \frac{k+c+2 m}{N}} a[0]. 
\end{equation}

\noindent Without loss of generality, we can take $a[0]=1$. Then, we have

\begin{equation}
a[k]=e^{-j \pi P k \frac{k+c+2 m}{N}}, 
\end{equation}
which gives us the general expressions of a Zadoff-Chu sequence. In conclusion, we can say that Zadoff-Chu sequences are the only sequences that convert a time shift into a frequency shift.

\bibliographystyle{IEEEtran}
\bibliography{IEEEabrv, otfs}

\end{document}